\documentclass[11pt,reqno]{article}

\usepackage{microtype}

\usepackage{amsmath, amssymb, amsfonts, bm, cite}\usepackage{mathrsfs}
\renewcommand{\mathbf}[1]{\boldsymbol{#1}}

\usepackage{geometry}
\usepackage{fullpage}
\usepackage{pdfsync, graphicx}
\usepackage{amsthm}

\usepackage[colorlinks=true,citecolor=blue,bookmarksnumbered=true,hyperfootnotes=false]{hyperref}

\newtheorem{theorem}{Theorem}[section]

\newtheorem{lemma}[theorem]{Lemma}
\newtheorem{corollary}[theorem]{Corollary}
\newtheorem{definition}{Definition}[section]
\newtheorem{remark}{Remark}[section]
\newtheorem*{remark*}{Remark}

\newcommand{\concept}[1]{\textbf{#1}}

\theoremstyle{plain}
\newtheorem{proposition}[theorem]{Proposition}
\newtheorem{claim}[theorem]{Claim}

\newcommand{\ANN}{\mathsf{ANN}}
\newcommand{\PM}{\mathsf{PM}}

\title{Simple average-case lower bounds for approximate near-neighbor from isoperimetric inequalities}

\author{Yitong Yin\thanks{State Key Lab for Novel Software Technology, Nanjing University, China. \texttt{yinyt@nju.edu.cn}. This work was supported by NSFC grants 61272081 and 61321491.}
}

\date{}

\begin{document}

\maketitle

\begin{abstract}
We prove an $\Omega(d/\log \frac{sw}{nd})$ lower bound for the average-case cell-probe complexity of deterministic or Las Vegas randomized algorithms solving approximate near-neighbor (ANN) problem in $d$-dimensional Hamming space in the cell-probe model with $w$-bit cells, using a table of size $s$. This lower bound matches the highest known worst-case cell-probe lower bounds  for any static data structure problems.

This average-case cell-probe lower bound is proved in a general framework which relates the cell-probe complexity of ANN to isoperimetric inequalities in the underlying metric space. A tighter connection between ANN lower bounds and isoperimetric inequalities is established by a stronger richness lemma proved by cell-sampling techniques.
\end{abstract}

\section{Introduction}\label{section-intro}
The nearest neighbor search ($\mathsf{NNS}$) problem is a fundamental problem in Computer Science. In this problem, a database $y=(y_1,y_2,\ldots,y_n)$ of $n$ points from a metric space $(X, \mathrm{dist})$ is preprocessed to a data structure, and at the query time given a query point $x$ from the same metric space, we are asked to find the point $y_i$  in the database which is closest to $x$ according to the metric.

In this paper, we consider a decision and approximate version of $\mathsf{NNS}$, the approximate near-neighbor ($\ANN$) problem, where the algorithm is asked to distinguish between the two cases:  (1) there is a point in the databases that is  $\lambda$-close to the query point for some radius $\lambda$, or (2) all points in the database are $\gamma\lambda$-far away from the query point, where $\gamma\ge 1$ is the approximation ratio.

The complexity of nearest neighbor search has been extensively studied in the cell-probe model, a classic model for data structures. In this model, the database is encoded to a table consisting of memory cells. Upon each query, a cell-probing algorithm answers the query by making adaptive cell-probes to the table. 
The complexity of the problem is measured by the tradeoff between the time cost (in terms of number of cell-probes to answer a query) and the space cost (in terms of sizes of the table and cells).
There is a substantial body of work on the cell-probe complexity of $\mathsf{NNS}$ for various metric space~\cite{borodin1999lower,chakrabarti1999lower,barkol2000tighter,jayram2003cell,chakrabarti2004optimal, liu2004strong, patrascu06eps2n, patrascu08Linfty, panigrahy2008geometric,panigrahy2010lower, kapralov2012nns, yin2014certificates}.

It is widely believed that $\mathsf{NNS}$ suffers from the ``curse of dimensionality''~\cite{indyk2004nnh}: The problem may become intractable to solve when the dimension of the metric space becomes very high. 
Consider the most important example, $d$-dimensional Hamming space $\{0,1\}^d$ with $d\ge C\log n$ for a sufficiently large constant $C$. The conjecture is that $\mathsf{NNS}$ in this metric remains hard to solve when either approximation or randomization is allowed individually.

In a series of pioneering works~ \cite{borodin1999lower,barkol2000tighter,jayram2003cell,liu2004strong,patrascu06eps2n}, by a rectangle-based technique of asymmetric communication complexity known as the richness lemma~\cite{miltersen1998data}, cell-probe lower bounds in form of $\Omega(d/\log s)$, where $s$ stands for the number of cells in the table, were proved for deterministic approximate near-neighbor (due to Liu~\cite{liu2004strong}) and randomized exact near-neighbor (due to Barkol and Rabani~\cite{barkol2000tighter}). Such lower bound is the highest possible lower bound one can prove in the communication model. 
This fundamental barrier was overcome by an elegant self-reduction technique introduced in the seminal work of~P\v{a}tra\c{s}cu and Thorup~\cite{patrascu2010higher}, in which the cell-probe lower bounds for deterministic $\ANN$ and randomized exact near-neighbor were improved to $\Omega(d/\log \frac{sw}{n})$, where $w$ represents the number of bits in a cell.
More recently, in a previous work of us~\cite{yin2014certificates}, by applying the technique of P\v{a}tra\c{s}cu and Thorup to the certificates in data structures, the lower bound for deterministic $\ANN$ was further improved to $\Omega(d/\log \frac{sw}{nd})$. 
This last lower bound behaves differently for the polynomial space where $sw=\mathrm{poly}(n)$, near-linear space where $sw=n\cdot \mathrm{polylog}(n)$, and linear space where $sw=O(nd)$. In particular, the bound becomes $\Omega(d)$ when the space cost is strictly linear in the entropy of the database, i.e.~when $sw=O(nd)$.

When both randomization and approximation are allowed, the complexity of $\mathsf{NNS}$ is substantially reduced. With polynomial-size tables, a $\Theta(\log\log d/\log\log\log d)$ tight bound was proved for randomized approximate $\mathsf{NNS}$ in $d$-dimensional Hamming space~\cite{chakrabarti1999lower,chakrabarti2004optimal}. If we only consider the decision version, the randomized $\ANN$ can be solved with $O(1)$ cell-probes on a table of polynomial size~\cite{chakrabarti2004optimal}.
For tables of near-linear size, a technique called cell-sampling was introduced by Panigrahy~\textit{et al.}~\cite{panigrahy2008geometric,panigrahy2010lower} to prove $\Omega(\log n/\log\frac{sw}{n})$ lower bounds for randomized $\ANN$. This was later extended to general asymmetric metrics~\cite{abdullah2015directed}.

Among these lower bounds, the randomized $\ANN$ lower bounds of Panigrahy \emph{et al.}~\cite{panigrahy2008geometric,panigrahy2010lower} were proved explicitly for \emph{average-case} cell-probe complexity.
The significance of average-case complexity for $\mathsf{NNS}$ was discussed in their papers. A recent breakthrough in upper bounds~\cite{andoni2015optimal} also attributes to solving the problem on a random database. 
Retrospectively, the randomized exact near-neighbor lower bounds due to the density version of richness lemma~\cite{borodin1999lower,barkol2000tighter,jayram2003cell} also hold for random inputs.
All these average-case lower bounds hold for Monte Carlo randomized algorithms with fixed worst-case cell-probe complexity. 
This leaves open an important case: the average-case cell-probe complexity for the deterministic or Las Vegas randomized algorithms for $\ANN$, where the number of cell-probes may vary for different inputs.

\subsection{Our contributions}

We study the average-case cell-probe complexity of deterministic or Las Vegas randomized algorithms for the approximate near-neighbor ($\ANN$) problem, where the number of cell-probes to answer a query may vary for different query-database pairs and the average is taken with respect to the distribution over input queries and databases.

For $\ANN$ in Hamming space $\{0,1\}^n$, the hard distribution over inputs is very natural: 
Every point $y_i$ in the database $y=(y_1,y_2,\ldots,y_n)$ is sampled uniformly and independently from the Hamming space $\{0,1\}^d$, and the query point $x$ is also a point sampled uniformly and independently from $\{0,1\}^d$. According to earlier average-case lower bounds~\cite{panigrahy2008geometric,panigrahy2010lower} and the recent data-dependent LSH algorthm~\cite{andoni2015optimal}, this input distribution seems to capture the hardest case for nearest neighbor search and is also a central obstacle to overcome for efficient algorithms. 

By a simple proof,  we show the following lower bound for the average-case cell-probe complexity of $\ANN$ in Hamming space with this very natural input distribution.

\begin{theorem}\label{main-ANN-lower-bound}
For $d\ge 32\log n$ and $d<n^{o(1)}$, any deterministic or Las Vegas randomized algorithm solving $(\gamma,\lambda)$-approximate near-neighbor problem  in $d$-dimensional Hamming space in the cell-probe model with $w$-bit cells for $w<n^{o(1)}$, using a table of size $s<2^d$, must have expected cell-probe complexity $t=\Omega\left(\frac{d}{\gamma^2\log\frac{sw\gamma^2}{nd}}\right)$, where the expectation is taken over both the uniform and independent input database and query   and the random bits of the algorithm.
\end{theorem}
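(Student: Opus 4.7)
The plan is to reduce the average-case bound to a worst-case guarantee on a $1-o(1)$ fraction of inputs, cast the resulting deterministic algorithm as an asymmetric communication protocol, extract a large monochromatic combinatorial rectangle by a strengthened richness argument based on cell sampling, and rule out that rectangle via an isoperimetric inequality on the Hamming cube.

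First, by Yao's minimax principle it suffices to bound the expected cost $t$ of a deterministic algorithm under the uniform product distribution on $(x,y)\in\{0,1\}^d\times(\{0,1\}^d)^n$. Markov's inequality then produces a set $G$ of inputs of measure $1-o(1)$ on which the algorithm uses at most $T=O(t)$ probes, and Las Vegas correctness is preserved because the truncation is used only in the analysis. The radius $\lambda$ is chosen at a Chernoff scale of order $\sqrt{d\log n}$ below $d/2$ so that both YES and NO instances of $(\gamma,\lambda)$-ANN have $\Theta(1)$ mass inside $G$; the $\gamma^2$ factor in the theorem enters here, through the Chernoff exponent at the gap $(\gamma-1)\lambda$.

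Second, I would view the probing algorithm on $G$ as an asymmetric protocol in which Alice holds $x$ and Bob holds $y$, each of the $T$ rounds carrying $\log s$ Alice-bits and $w$ Bob-bits. Following Panigrahy et al., I cell-sample: include each cell independently with probability $p$, producing a random set $S$ of expected size $s'=sp$. Call $(x,y)\in G$ \emph{captured} if all its probed cells lie in $S$; this happens with probability $p^T$, so a $p^T$-fraction of $G$ is captured for a typical $S$, and on captured pairs the algorithm's answer depends only on $x$ and on the $s'w$-bit snapshot of $S$. The strengthened richness lemma announced in the introduction is what distils this snapshot-dependent protocol into a combinatorial rectangle $A\times B\subseteq\{0,1\}^d\times(\{0,1\}^d)^n$ on which the ANN answer is constant, with $|A|/2^d\ge 2^{-O(T\log s)}$ and $|B|/2^{nd}\ge 2^{-O(s'w)}$. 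The improvement over the classical richness template, and the source of the $nd$ in the denominator of the theorem, is that the Bob-side bound is $s'w$ rather than the transcript length $Tw$, obtained by pigeonholing on snapshots of $S$ rather than on Bob's full messages.

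Third, I would close by an isoperimetric estimate. Without loss of generality the monochromatic label is ``no near neighbor,'' so for every $x\in A$ the consistent set of $y\in B$ lies in the event that all $n$ independent coordinates of $y$ avoid the $\gamma\lambda$-ball around $x$, which has measure at most $e^{-nq_{\gamma\lambda}}$ where $q_{\gamma\lambda}$ is the Hamming ball volume. Combining $|A|/2^d\cdot|B|/2^{nd}\le e^{-nq_{\gamma\lambda}}$ with the richness lower bound yields $T\log s+s'w\ge \Omega(nq_{\gamma\lambda})$, and the isoperimetric estimate for $\{0,1\}^d$ at the Chernoff scale of step one converts $nq_{\gamma\lambda}$ into a quantity of order $d/\gamma^2$. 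Balancing $p$ so that the two terms of the richness bound are comparable, i.e.\ $s'w\asymp T\log(sw\gamma^2/(nd))$, then gives the claimed $T=\Omega(d/(\gamma^2\log(sw\gamma^2/(nd))))$.

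The main obstacle is the snapshot-based richness step: the standard richness argument using Alice's transcript alone only produces the weaker $2^{-O(Tw)}$ database-side factor, and to replace $Tw$ by $s'w$ one must pigeonhole on the snapshot of $S$ rather than on Bob's message, while simultaneously preserving enough mass of $G$ to survive the Markov truncation. Managing this balance is what forces the hypotheses $d\ge 32\log n$ and $d,w<n^{o(1)}$, which ensure that the $o(1)$-measure loss from the Las Vegas reduction is negligible compared to the rectangle densities appearing in the richness/isoperimetric trade-off.
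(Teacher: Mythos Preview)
Your high-level plan matches the paper's: an average-case richness lemma via cell sampling, then refute the resulting rectangle by Hamming isoperimetry. The cell-sampling portion is right in spirit; the paper samples deterministically by picking the best $\Delta$-subset of cells and absorbs the average-case handling directly into the richness lemma rather than doing a preliminary Markov truncation, but that is a packaging difference, not a conceptual one.

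The genuine gap is your third step. The inequality $|A|/2^d\cdot|B|/2^{nd}\le e^{-nq_{\gamma\lambda}}$ is not what monochromaticity gives you, and in any case $nq_{\gamma\lambda}=O(1)$: by your own Chernoff-scale choice of $\gamma\lambda$ the $\gamma\lambda$-ball has measure at most $1/(2n)$, so $nq_{\gamma\lambda}\le 1/2$. No ``isoperimetric conversion'' can turn a constant into $d/\gamma^2$. What is actually needed is Harper's theorem applied to the $\lambda$-neighborhood of the \emph{set} $A$, not the volume of a single ball around a point. If $A\times B$ is a 1-rectangle then every coordinate $y_i$ of every $y\in B$ lies outside $N_\lambda(A)$ (only the $\lambda$-radius is forced by the answer ``no near neighbor'', not the $\gamma\lambda$-radius). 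Hence $B\subseteq (X\setminus N_\lambda(A))^n$ and $\mu^n(B)\le(1-\mu(N_\lambda(A)))^n$. Harper now says that once $\mu(A)\ge 2^{-\Theta(d/\gamma^2)}$ one already has $\mu(N_\lambda(A))\ge 1-2^{-\Theta(d/\gamma^2)}$, so $\mu^n(B)\le 2^{-\Theta(nd/\gamma^2)}$. This two-parameter expansion (the paper's $(\Phi,\Psi)$-expanding with $\Phi=\Psi=2^{\Theta(d/\gamma^2)}$) is what you balance against the cell-sampled richness bound to get $t=\Omega\bigl(d/(\gamma^2\log(sw\gamma^2/(nd)))\bigr)$. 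The $\gamma\lambda$-ball volume plays a separate role entirely: it guarantees that under $\mu\times\mu^n$ the problem is $\Omega(1)$-dense in 1's, so that the richness lemma applies at all.
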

This lower bound matches the highest known worst-case cell-probe lower bounds for \emph{any} static data structure problems. Such lower bound was only known for polynomial evaluation~\cite{siegel1989universal,larsen2012higher} and also worst-case deterministic $\ANN$ due to our previous work~\cite{yin2014certificates}.

We also prove an average-case cell-probe lower bound for $\ANN$ under $\ell_\infty$-distance. The lower bound matches the highest known worst-case lower bound for the problem~\cite{patrascu08Linfty}.

In fact, we prove these lower bounds in a unified framework that relates the average-case cell-probe complexity of $\ANN$  to isoperimetric inequalities regarding an expansion property of the metric space.

Inspired by the notions of metric expansion defined in~\cite{panigrahy2010lower}, we define the following notion of expansion for metric space.
Let $(X,\mathrm{dist})$ be a metric space. The $\lambda$-neighborhood of a point $x\in X$, denoted as $N_\lambda(x)$ is the set of all points in $X$ within distance $\lambda$ from $x$. 
Consider a distribution $\mu$ over $X$.
We say 
the $\lambda$-neighborhoods are \concept{weakly independent} under distribution $\mu$, if for any point $x\in X$, the measure of the $\lambda$-neighborhood $\mu(N_\lambda(x))<\frac{\beta}{n}$ for a constant $\beta<1$.
We say the $\lambda$-neighborhoods are \concept{$(\Phi,\Psi)$-expanding} under distribution $\mu$, if for any point set $A\subseteq X$ with $\mu(A)\ge\frac{1}{\Phi}$, we have $\mu(N_\lambda(A))\ge1-\frac{1}{\Psi}$, where $N_{\lambda}(A)$ denotes the set of all points within distance $\lambda$ from some point in $A$.

Consider the database $y=(y_1,y_2,\ldots,y_n)\in X^n$ with every point $y_i$ sampled independently from $\mu$, and the query $x\in X$ sampled independently from $\mu$. We denote this input distribution as $\mu\times\mu^n$.
We prove the following lower bound.
\begin{theorem}\label{main-ANN-expansion}
For a metric space $(X,\mathrm{dist})$, assume the followings:
\begin{itemize}
\item
the $\gamma\lambda$-neighborhoods are weakly independent under distribution $\mu$; 
\item
the $\lambda$-neighborhoods are $(\Phi,\Psi)$-expanding under distribution $\mu$.
\end{itemize}
Then any deterministic or Las Vegas randomized algorithm solving $(\gamma,\lambda)$-approximate near-neighbor problem in $(X,\mathrm{dist})$ in the cell-probe model with $w$-bit cells, using a table of size $s$, must have expected cell-probe complexity 
\[
t=\Omega\left(\frac{\log\Phi}{\log\frac{sw}{n\log\Psi}}\right) \quad\text{ or }\quad t=\Omega\left(\frac{n\log\Psi}{w+\log s}\right)
\] 
under input distribution $\mu\times\mu^n$.
\end{theorem}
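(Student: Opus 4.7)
The plan is to combine a Yao/Markov reduction with a Panigrahy-style cell-sampling argument and a richness/encoding lemma driven by the two isoperimetric hypotheses. As a first move I would apply Yao's minimax principle to the product distribution $\mu\times\mu^n$, replacing any Las Vegas randomized algorithm of expected cost $t$ by a deterministic algorithm of expected cost $t$ under this fixed input distribution. A Markov inequality then isolates a subset $G\subseteq X\times X^n$ of $\mu\times\mu^n$-probability at least $1/2$ on which the actual probe count is at most $2t$; on $G$ the algorithm behaves as a worst-case $2t$-probe deterministic data structure, which is exactly the regime where the cell-sampling machinery is available.

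I would then cell-sample: include each of the $s$ cells in a set $S$ independently with probability $\alpha\in(0,1)$ to be tuned. For any fixed $(x,y)\in G$, all $\le 2t$ probes lie inside $S$ with probability at least $\alpha^{2t}$, so by averaging some fixed $S$ of size at most $2\alpha s$ captures a subfamily $G'\subseteq G$ of $\mu\times\mu^n$-mass $\Omega(\alpha^{2t})$, whose $\ANN$ answers are determined by the $O\!\bigl(\alpha s(w+\log(1/\alpha))\bigr)$ bits describing $S$ and its contents. This upgrades $S$ into a small ``partial'' data structure that correctly answers $\ANN$ on a $\mu\times\mu^n$-typical set of inputs.

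The heart of the proof is the isoperimetric encoding step. For a $\mu^n$-typical database $y$ the captured query set $Q(y)=\{x:(x,y)\in G'\}$ has $\mu$-mass $\Omega(\alpha^{2t})$, and weak independence of $\gamma\lambda$-neighborhoods forces the correct $\ANN$ answer to be ``no'' on all but a $\beta$-fraction of random queries. I would use $(\Phi,\Psi)$-expansion in covering form: any family of databases producing the same yes/no answer pattern on the captured queries has its $n$ points $\lambda$-covering a $\mu$-significant subset of $X$, so each such database is describable in at most $O(n\log\Psi)$ bits on top of the pattern. Combined with a weak-independence-based lower bound of $\Omega(n\log\Phi)$ on the $\mu^n$-entropy of typical databases, this yields $\alpha s(w+\log(1/\alpha))\gtrsim n\log\Phi - O(n\log\Psi)$. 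Tuning $\alpha$ against $\alpha^{2t}=\Omega(1/\mathrm{poly})$ then produces the two-pronged bound $t=\Omega(\log\Phi/\log(sw/n\log\Psi))$ or $t=\Omega(n\log\Psi/(w+\log s))$, the second branch corresponding to the regime where the $n\log\Psi$ floor dominates even with $\alpha=\Theta(1)$.

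The main obstacle I anticipate is the encoding step: translating $(\Phi,\Psi)$-expansion into a clean count of databases compatible with a given answer pattern on a $\mu$-weighted captured set, rather than on a combinatorial rectangle. The captured set is probabilistically defined by both the algorithm and $\mu\times\mu^n$, so a careful $\mu$-weighted covering argument is needed to force the $n$ database points of a compatible $y$ into a structured subset of $X$ whose description costs only $O(n\log\Psi)$ bits. Calibrating this against the weak-independence entropy lower bound so that both branches of the theorem emerge from a single inequality is the delicate quantitative part.
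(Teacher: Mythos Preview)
Your high-level shape (Yao/Markov reduction, then cell-sampling) matches the paper's opening moves inside its richness lemma, but the ``isoperimetric encoding step'' you sketch has a real gap, and the parameters do not line up.

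First, the direction in which you invoke $(\Phi,\Psi)$-expansion is inverted. Expansion says: if the captured \emph{query} set $Q$ has $\mu(Q)\ge 1/\Phi$, then $\mu(N_\lambda(Q))\ge 1-1/\Psi$, and hence any database whose answers on $Q$ are all ``no'' must have every $y_i$ confined to the small set $X\setminus N_\lambda(Q)$ of $\mu$-mass $\le 1/\Psi$. Your description (``its $n$ points $\lambda$-covering a $\mu$-significant subset'') is the opposite statement and does not follow from the hypothesis.

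Second, the claimed ``weak-independence-based lower bound of $\Omega(n\log\Phi)$ on the $\mu^n$-entropy of typical databases'' is unfounded. Weak independence only bounds $\mu(N_{\gamma\lambda}(x))<\beta/n$; it says nothing about $H(\mu)$ or about $\Phi$, which is a parameter of the $\lambda$-expansion, not of the $\gamma\lambda$-balls. Consequently the inequality $\alpha s(w+\log(1/\alpha))\gtrsim n\log\Phi - O(n\log\Psi)$ is not derivable from your ingredients, and in any case it does not rearrange to the stated dichotomy: the $\log\Phi$ in the first branch comes from a constraint on $\mu(Q)$, not from database entropy.

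The paper avoids entropy accounting altogether. It packages the cell-sampling step as a strengthened \emph{richness lemma}: for any $\Delta\in[\Theta(t),s]$ there is a monochromatic $1$-rectangle $A\times B$ in $f$ with $\mu(A)\ge(\Delta/s)^{O(t)}$ and $\nu^n(B)\ge 2^{-O(\Delta(w+\ln(s/\Delta)))}$. Then the conjunction structure $f(x,y)=\bigwedge_i g(x,y_i)$ forces $B\subseteq C^n$ where $C=\{z:\forall x\in A,\ g(x,z)=1\}$, so $\nu^n(B)\le\nu(C)^n$. Expansion now says directly that $A\times C$ cannot be a $1$-rectangle in $g$ with $\mu(A)\ge 1/\Phi$ and $\nu(C)\ge 1/\Psi$. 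Choosing $\Delta\approx n\log\Psi/w$ makes $\nu^n(B)>\Psi^{-n}$, hence $\nu(C)>1/\Psi$, forcing $\mu(A)<1/\Phi$ and thus $(\Delta/s)^{O(t)}<1/\Phi$; this is exactly $t=\Omega(\log\Phi/\log(sw/n\log\Psi))$. The second branch arises when the chosen $\Delta$ falls below the threshold $\Theta(t)$. No encoding of databases, no entropy comparison, and no conversion between $\mu$-mass and bits is needed.
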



The key step to prove such a theorem is a stronger version of the richness lemma that we prove in Section~\ref{section-richness}. The proof of this stronger richness lemma uses an idea called ``cell-sampling'' introduced by Panigrahy \emph{et al.}~\cite{panigrahy2010lower} and later refined by Larsen~\cite{larsen2012higher}. 
This new richness lemma as well as this connection between the rectangle-based techniques (such as the richness lemma) and information-theory-based techniques (such as cell-sampling) are of interests by themselves.

\section{Preliminary}

Let $(X,\mathrm{dist})$ be a metric space. Let $\gamma\ge 1$ and $\lambda\ge 0$. The \concept{$(\gamma,\lambda)$-approximate near-neighbor problem} $(\gamma,\lambda)\text{-}\ANN^n_X$ is defined as follows:  A database $y=(y_1,y_2,\ldots,y_n)\in X^n$ of $n$ points from $X$ is preprocessed and stored as a data structure. Upon each query $x\in X$, by accessing the data structure we want to distinguish between the following two cases: (1) there is a point $y_i$ in the database such that $\mathrm{dist}(x,z)\le\lambda$; (2) for all points $y_i$ in the database we have $\mathrm{dist}(x,z)>\gamma\lambda$. For all other cases the answer can be arbitrary.

More abstractly, given a universe $X$ of queries and a universe $Y$ of all databases, a \concept{data structure problem} is a function $f: X\times Y\to Z$ that maps every pair of \concept{query} $x\in X$ and \concept{database} $y\in Y$  to an answer $f(x,y)\in Z$. In our example of  $(\gamma,\lambda)\text{-}\ANN^n_X$, the query universe is the metric space $X$, the database universe is the set $Y=X^n$ of all tuples of $n$ points from $X$, and $f$ maps each query $x\in X$ and database $y\in Y$ to an Boolean answer: $f(x,y)=0$ if there is a $\lambda$-near neighbor of $x$ in the database $y$; $f(x,y)=1$ if no points in the database $y$ is a $\gamma\lambda$-near neighbor of $x$; and $f(x,y)$ can be arbitrary if otherwise. Note that due to a technical reason, we usually use $1$ to indicate the ``no near-neighbor'' case.

Given a {data structure problem} $f:X\times Y\rightarrow Z$, a code $T:Y\to\Sigma^s$ with alphabet $\Sigma=\{0,1\}^w$ encodes every database $y\in Y$ to a \concept{table} $T_y$ of $s$ \concept{cells} with each cell storing a word of $w$ bits.
We use $[s]=\{1,2,\ldots,s\}$ to denote the set of indices of cells. For each $i\in[s]$, we use $T_y[i]$ to denote the content of the $i$-th cell of table $T_y$; and for $S\subseteq[s]$, we write $T_y[S]=(T_y[i])_{i\in S}$ for the tuple of the contents of the cells in $S$.
Upon each query $x\in X$, a \concept{cell-probing algorithm} adaptive retrieves the contents of the cells in the table $T_y$ (which is called \concept{cell-probes}) and outputs the answer $f(x,y)$ at last. 
Being adaptive means that the cell-probing algorithm is actually a decision tree: In each round of cell-probing the address of the cell to probe next is determined by the query $x$ as well as the contents of the cells probed in previous rounds.
Together, this pair of code and decision tree is called a \concept{cell-probing scheme}.

For randomized cell-probing schemes, the cell-probing algorithm takes a sequence of random bits as its internal random coin. In this paper we consider only deterministic or Las Vegas randomized cell-probing algorithms, therefore the algorithm is guaranteed to output a correct answer when it terminates.

When a cell-probing scheme is fixed, the size $s$ of the table as well as the length $w$ of each cell are fixed. These two parameters together give the space complexity. And the number of cell-probes may vary for each pair of inputs $(x,y)$ or may be a random variable if the algorithm is randomized. Given a distribution $\mathcal{D}$ over $X\times Y$, the \concept{average-case cell-probe complexity} for the cell-probing scheme is given by the expected number of cell-probes to answer $f(\boldsymbol{x},\boldsymbol{y})$ for a $(\boldsymbol{x},\boldsymbol{y})$ sampled from $\mathcal{D}$, where the expectation is taken over both the input distribution $\mathcal{D}$ and the internal random bits of the cell-probing algorithm.

\section{A richness lemma for average-case cell-probe complexity}\label{section-richness}
The richness lemma (or the rectangle method) introduced in~\cite{miltersen1998data} is a classic tool for proving cell-probe lower bounds. A data structure problem $f:X\times Y\rightarrow\{0,1\}$ is a natural communication problem, and a cell-probing scheme can be interpreted as a communication protocol between the cell-probing algorithm and the table, with cell-probes as communications.

Given a distribution $\mathcal{D}$ over $X\times Y$, a data structure problem $f:X\times Y\rightarrow\{0,1\}$ is \concept{$\alpha$-dense} under distribution $\mathcal D$ if $\mathbb{E}_\mathcal{D}[f(\boldsymbol{x},\boldsymbol{y})]\ge\alpha$. 
A combinatorial rectangle $A\times B$ for $A\subseteq X$ and $B\subseteq Y$ is a monochromatic 1-rectangle in $f$ if $f(x,y)=1$ for all $(x,y)\in A \times B$.

The richness lemma states that if a problem $f$ is dense enough (i.e.~being rich in 1's) and is easy to solve by communication, then $f$ contains large monochromatic 1-rectangles. Specifically, if an $\alpha$-dense problem $f$ can be solved by Alice sending $a$ bits and Bob sending $b$ bits in total, then $f$ contains a monochromatic 1-rectangle of size ${\alpha}\cdot{2^{-O(a)}}\times{\alpha}\cdot{2^{-O(a+b)}}$ in the uniform measure. In the cell-probe model with $w$-bit cells, tables of size $s$ and cell-probe complexity $t$, it means the monochromatic 1-rectangle is of size ${\alpha}\cdot{2^{-O(t\log s)}}\times{\alpha}\cdot{2^{-O(t\log s+tw)}}$. The cell-probe lower bounds can then be proved by refuting such large 1-rectangles for specific data structure problems $f$.

We prove the following richness lemma for average-case cell-probe complexity.

\begin{lemma}\label{lemma-richness-average}
Let $\mu,\nu$ be distributions over $X$ and $Y$ respectively, and let $f:X\times Y\rightarrow \{0,1\}$ be $\alpha$-dense under the product distribution $\mu\times\nu$.
If there is a deterministic or randomized Las Vegas cell-probing scheme solving $f$ on a table of $s$ cells, each cell containing $w$ bits, with expected $t$ cell-probes under input distribution $\mu\times\nu$, then for any $\Delta\in\left[{32t}/{\alpha^2}, s\right]$, there is a monochromatic 1-rectangle $A\times B\subseteq X\times Y$ in $f$ such that $\mu(A)\ge{\alpha\cdot \left(\frac{\Delta}{s}\right)^{O(t/\alpha^2)}}$ and $\nu(B)\ge{\alpha \cdot 2^{-O(\Delta\ln\frac{s}{\Delta}+\Delta w)}}$. 
\end{lemma}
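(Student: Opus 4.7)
The plan is to combine a Markov-type average-to-worst reduction with the cell-sampling technique of Panigrahy \emph{et al.}\ and Larsen, and then extract a $1$-rectangle by an averaging argument over cell contents, following the template of the classical richness lemma.

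First I reduce to the deterministic case: for a Las Vegas scheme, averaging over its random bits fixes a realization that is still correct on every input and whose expected probe count under $\mu\times\nu$ is still at most $t$. Then Markov's inequality restricts attention to the set $G\subseteq X\times Y$ of inputs on which this deterministic algorithm makes at most $\tau=O(t/\alpha^2)$ probes; with a suitable constant, $\Pr_{\mu\times\nu}[(x,y)\notin G]\le\alpha/4$, so
\[
\Pr_{\mu\times\nu}\!\left[f(x,y)=1\text{ and }(x,y)\in G\right]\ge 3\alpha/4.
\]
For cell-sampling I then pick a uniformly random $S\subseteq[s]$ of size $\Delta$; for each input in $G$ the probability over $S$ that all of its (at most) $\tau$ probed cells lie in $S$ equals $\binom{\Delta}{\tau}/\binom{s}{\tau}$, and the hypothesis $\Delta\ge 32t/\alpha^2\ge 32\tau$ makes this $\Omega((\Delta/s)^{\tau})$. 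Averaging and fixing a best $S=S^{*}$, the set
\[
H=\{(x,y)\in G:\ f(x,y)=1\text{ and all probes lie in }S^{*}\}
\]
has $(\mu\times\nu)$-measure at least $\Omega(\alpha(\Delta/s)^{\tau})$.

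For the rectangle step I mimic the richness lemma. For each $v\in\{0,1\}^{\Delta w}$ let $Y_v=\{y:T_y[S^{*}]=v\}$, and let $X_v$ be the set of queries $x$ for which the simulated algorithm on $x$, assuming the cells of $S^{*}$ hold $v$, probes only within $S^{*}$ and outputs $1$. For any $x\in X_v$, $y\in Y_v$ the simulation is faithful to the real execution, so by Las Vegas correctness $f(x,y)=1$; hence each $X_v\times Y_v$ is a monochromatic $1$-rectangle. Since $y$ uniquely determines $v=T_y[S^{*}]$, the rectangles $X_v\times Y_v$ cover $H$ with disjoint $y$-coordinates, so $\sum_v\mu(X_v)\nu(Y_v)\ge(\mu\times\nu)(H)=\Omega(\alpha(\Delta/s)^{\tau})$. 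A density-pruning step discards the $v$ with $\mu(X_v)<c\alpha(\Delta/s)^{\tau}$ for a small enough constant $c$, removing at most half the total weight; pigeonholing on the at most $2^{\Delta w}$ surviving values of $v$ then yields some $v^{*}$ with both $\mu(X_{v^{*}})\ge\Omega(\alpha(\Delta/s)^{\tau})$ and $\mu(X_{v^{*}})\nu(Y_{v^{*}})\ge\Omega(\alpha(\Delta/s)^{\tau})\cdot 2^{-\Delta w}$, from which $\nu(Y_{v^{*}})\ge\alpha(\Delta/s)^{O(\tau)}\cdot 2^{-O(\Delta w)}=\alpha\cdot 2^{-O(\Delta\ln(s/\Delta)+\Delta w)}$ using $\tau=O(\Delta)$. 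Setting $A=X_{v^{*}}$ and $B=Y_{v^{*}}$ gives the required rectangle.

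The main obstacle is precisely this rectangle extraction: the natural pigeonhole over $v$ bounds only the product $\mu(X_{v^{*}})\nu(Y_{v^{*}})$, whereas the lemma demands individual, asymmetric lower bounds on both sides. The density-pruning trick separates them cleanly only when the threshold on $\mu(X_v)$ is strictly below the total density $\Omega(\alpha(\Delta/s)^{\tau})$; lining up the constants through the Markov and cell-sampling steps, and verifying that the $y$-fibres of $H$ align exactly with the partition of $Y$ by $v$ so the sum over $v$ is honest rather than an over-count, is the delicate accounting the proof must handle.
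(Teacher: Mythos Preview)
Your argument is correct and reaches the stated bounds, but it is organized differently from the paper's proof. The paper first isolates a set $Y_{\mathsf{good}}$ of databases (via two rounds of averaging) on which positive queries are $\Omega(\alpha)$-dense and the expected probe count among positive queries is at most $8t/\alpha^2$; it then, for each good $y$, picks the \emph{database-specific} best $\Gamma_y^*\subseteq[s]$ of size $\Delta$, using Jensen's inequality (through a hypergraph lemma) rather than a Markov truncation to handle the non-uniform probe counts. The certificate is the pair $\omega_y=(\Gamma_y^*,T_y[\Gamma_y^*])$, and pigeonholing over the $\binom{s}{\Delta}2^{\Delta w}$ possible certificates directly yields $\nu(B)$, while $\mu(A)$ comes straight from the per-database guarantee; no density-pruning step is needed because databases sharing a certificate automatically share $X_y^*$. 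Your route instead Markov-truncates globally to $\le\tau$ probes, fixes a \emph{single} global $S^*$, and pigeonholes only over the $2^{\Delta w}$ contents, compensating with the pruning trick to split the product bound into separate bounds on $\mu(A)$ and $\nu(B)$. Both are valid: the paper's per-database optimization plus Jensen is a bit cleaner in avoiding the truncation-and-prune maneuver, whereas your fixed-$S^*$ version is conceptually simpler and saves the $\binom{s}{\Delta}$ factor in the pigeonhole (which in the paper is harmlessly absorbed into $2^{O(\Delta\ln(s/\Delta))}$). One cosmetic point: the probability you write as $\binom{\Delta}{\tau}/\binom{s}{\tau}$ is actually $\binom{\Delta}{k}/\binom{s}{k}$ for an input making $k\le\tau$ probes, but since this ratio is decreasing in $k$ your lower bound and the ensuing $\Omega((\Delta/s)^\tau)$ estimate are fine.
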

Compared to the classic richness lemma, this new lemma has the following advantages:
\begin{itemize}
\item It holds for average-case cell-probe complexity.
\item It gives stronger result even restricted to worst-case complexity. The newly introduced parameter $\Delta$ should not be confused as an overhead caused by the average-case complexity argument, rather, it strengthens the result even for the worst-case lower bounds. When $\Delta=t$ it gives the bound in the classic richness lemma.
\item The lemma claims the existence of a \emph{family} of rectangles parameterized by  $\Delta$, therefore to prove a cell-probe lower bound it is enough to refute any one rectangle from this family. As we will see, this gives us a power to prove the highest lower bounds (even for the worst case) known to any static data structure problems.
\end{itemize}

The proof of this lemma uses an argument called ``cell-sampling'' introduced by Panigrahy~\emph{et al.}~\cite{panigrahy2008geometric, panigrahy2010lower} for approximate nearest neighbor search and later refined by Larsen~\cite{larsen2012higher} for polynomial evaluation. Our proof is greatly influenced by Larsen's approach.

The rest of this section is dedicated to the  proof of this lemma.

\newcommand{\SubHypergraph}{
To prove our richness lemma, we need the following combinatorial lemma for induced sub-hypergraphs with non-uniform edges.
Let $V$ be a finite set. A \emph{hypergraph} $\mathcal{H}$ with vertex set $V$ is a set $\mathcal{H}\subseteq2^{V}$, where each $e\in\mathcal{H}$, called a \emph{hyperedge} or just edge, is a subset $e\subset V$. Note that in our definition, hyperedges are not necessarily having the same size.
Given a subset $\Gamma\subseteq V$ of vertices, the sub-hypergraph of $\mathcal{H}$ induced by $\Gamma$, denoted as $\mathcal{H}_\Gamma$, is defined as $\mathcal{H}_\Gamma=\{e\in\mathcal{H}\mid e\subseteq \Gamma\}$.
The following generic lemma for the size of induced sub-hypergraphs is proved by an easy application of the probabilistic method combined with Jensen's inequality.

\begin{lemma}\label{lemma-subhypergraph}
Let $\mathcal{H}\subseteq2^{[s]}$ be a hypergraph with vertex set $[s]$. Let $\mu$ be a distribution over all hyperedges  $e\in \mathcal{H}$. Assume that $\mathbb{E}_{\boldsymbol{e}\sim\mu}[|\boldsymbol{e}|]\le t$.
For any $4t\le \Delta\le s$, there exists a subset $\Gamma\subseteq[s]$ of size $|\Gamma|=\Delta$ such that the sub-hypergraph $\mathcal{H}_\Gamma$ induced by vertex subset $\Gamma$ have
\[
\mu(\mathcal{H}_\Gamma)\ge\frac{1}{2}\left(\frac{\Delta}{2s}\right)^t.
\]
\end{lemma}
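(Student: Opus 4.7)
The strategy is the standard probabilistic method paired with Jensen's inequality, as the surrounding text suggests. I would draw $\boldsymbol{\Gamma} \subseteq [s]$ uniformly at random from the $\binom{s}{\Delta}$ subsets of size $\Delta$, and argue that the expectation
\[
\mathbb{E}_{\boldsymbol{\Gamma}}\bigl[\mu(\mathcal{H}_{\boldsymbol{\Gamma}})\bigr] \;=\; \mathbb{E}_{\boldsymbol{e} \sim \mu}\!\left[\frac{\binom{s - |\boldsymbol{e}|}{\Delta - |\boldsymbol{e}|}}{\binom{s}{\Delta}}\right]
\]
already meets the required lower bound; a suitable deterministic $\Gamma$ of size $\Delta$ will then exist.

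For a fixed edge $e$ with $|e| = k$, the inclusion probability expands to $\prod_{i=0}^{k-1}\frac{\Delta - i}{s - i}$, and for $k \le \Delta/2$ every factor is at least $(\Delta - k + 1)/s \ge \Delta/(2s)$, giving the pointwise bound $\Pr[e \subseteq \boldsymbol{\Gamma}] \ge \bigl(\Delta/(2s)\bigr)^{|e|}$. The hypothesis $\Delta \ge 4t$, combined with Markov's inequality, gives $\mu\bigl(\{e : |e| > \Delta/2\}\bigr) \le 2t/\Delta \le 1/2$, so at least half of the $\mu$-mass sits on these ``small'' edges, whose conditional mean size is of order $t$. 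Restricting the expectation to small edges and applying Jensen's inequality to the convex function $k \mapsto \bigl(\Delta/(2s)\bigr)^{k}$ converts the pointwise bound into
\[
\mathbb{E}_{\boldsymbol{\Gamma}}\bigl[\mu(\mathcal{H}_{\boldsymbol{\Gamma}})\bigr] \;\ge\; \tfrac{1}{2}\left(\frac{\Delta}{2s}\right)^{O(t)},
\]
after which the probabilistic method produces the desired $\Gamma$.

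The main subtlety is tuning the constants so that the exponent lands exactly at $t$ with the stated prefactor $1/2$: the Markov cutoff at $\Delta/2$ naively inflates the conditional mean of $|\boldsymbol{e}|$ by up to a factor of two, while the pointwise approximation $\Pr[e \subseteq \boldsymbol{\Gamma}] \ge (\Delta/(2s))^{|e|}$ discards a $(1 - k/\Delta)^k$ factor hidden in the inclusion probability. The slack built into the statement — the factor $2$ inside the base and the $1/2$ out front — is sized precisely to absorb both losses, which is why the hypothesis is $\Delta \ge 4t$ rather than merely $\Delta \ge 2t$. Calibrating the Markov threshold carefully (or equivalently, switching to a Bernoulli sampling of $\boldsymbol{\Gamma}$ with inclusion probability $\Delta/(2s)$ and then padding to size $\Delta$, using that $\mu(\mathcal{H}_\Gamma)$ is monotone in $\Gamma$) is the bookkeeping required to reach the clean form stated.
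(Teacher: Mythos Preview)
Your overall strategy matches the paper's: uniform random $\Gamma$ of size $\Delta$, the pointwise lower bound $\Pr[e\subseteq\boldsymbol{\Gamma}]\ge(\Delta/(2s))^{|e|}$ for $|e|\le\Delta/2$, Jensen on the convex map $k\mapsto(\Delta/(2s))^k$, and Markov on $|\boldsymbol{e}|$. The gap is in how you combine the last two. You propose to condition on $\{|e|\le\Delta/2\}$ first and then apply Jensen to the conditional distribution; as you yourself note, this inflates the conditional mean of $|\boldsymbol{e}|$ by up to a factor of $2$, yielding only $\frac{1}{2}(\Delta/(2s))^{2t}$. Your assertion that the slack in the statement ``is sized precisely to absorb both losses'' is not right: the factor of $2$ inside the base is already consumed by the pointwise estimate $\prod_i\frac{\Delta-i}{s-i}\ge(\Delta/(2s))^k$, and the prefactor $\frac{1}{2}$ is not nearly enough to compensate for a doubling of the exponent, which costs an additional factor of $(\Delta/(2s))^t\le 2^{-t}$.

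The paper resolves this by reversing the order: apply Jensen to the \emph{full} distribution $\mu$ first, obtaining
\[
\sum_{e\in\mathcal{H}}\left(\frac{\Delta}{2s}\right)^{|e|}\mu(e)\ \ge\ \left(\frac{\Delta}{2s}\right)^{t}
\]
with exponent exactly $t$, and only then subtract the large-edge contribution
\[
\sum_{|e|>\Delta/2}\left(\frac{\Delta}{2s}\right)^{|e|}\mu(e)\ \le\ \left(\frac{\Delta}{2s}\right)^{\Delta/2}\Pr_{e\sim\mu}\!\left[|e|>\tfrac{\Delta}{2}\right]\ \le\ \frac{1}{2}\left(\frac{\Delta}{2s}\right)^{2t},
\]
using $\Delta\ge 4t$ both for $\Delta/2\ge 2t$ in the exponent and for the Markov bound $\Pr[|e|>\Delta/2]\le 2t/\Delta\le\frac{1}{2}$. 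The small-edge sum is then at least $(\Delta/(2s))^t-\frac{1}{2}(\Delta/(2s))^{2t}\ge\frac{1}{2}(\Delta/(2s))^t$, which is the stated bound. So Markov is used not to set up a conditioning before Jensen, but to show that the tail term removed \emph{after} Jensen is negligible; this is exactly the bookkeeping you flagged as the ``main subtlety,'' and it is what pins the exponent at $t$ rather than $2t$.
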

\begin{proof}
We prove this by the probabilistic method. Let $\Gamma\subseteq[s]$ be chosen uniformly among all subsets of size $|\Gamma|=\Delta$. Then a hyperedge $e\in\mathcal{H}$ is contained by $\Gamma$ with probability ${{s-|e|}\choose {\Delta-|e|}}/{s\choose\Delta}$. We denote this probability as $p(e)$. It can be verified that when $|e|\le\frac{\Delta}{2}$, it holds that 
\[
p(e)\ge \left(\frac{\Delta-|e|}{s-|e|}\right)^{|e|}\ge \left(\frac{\Delta}{2s}\right)^{|e|}.
\] 


Recall that $\mathcal{H}_\Gamma=\{e\in \mathcal{H}\mid e\subseteq \Gamma\}$. Each hyperedge $e\in\mathcal{H}$ appears in $\mathcal{H}_\Gamma$ with probability precisely $p(e)$. By linearity of expectation:
\[
\mathbb{E}[\mu(\mathcal{H}_\Gamma)]=\sum_{e\in\mathcal{H}} p(e)\ge\sum_{e\in\mathcal{H}\atop |e|\le\frac{\Delta}{2}}\left(\frac{\Delta}{2s}\right)^{|e|}\mu(e)
\]
Due to Jensen's inequality and the convexity of $\left(\frac{\Delta}{2s}\right)^x$ in $x$, it holds that
\[
\sum_{e\in\mathcal{H}}\left(\frac{\Delta}{2s}\right)^{|e|}\mu(e)\ge \left(\frac{\Delta}{2s}\right)^{\sum_{e\in\mathcal{H}}|e|\mu(e)}=\left(\frac{\Delta}{2s}\right)^{\mathbb{E}_{e\sim\mu}[|e|]}\ge\left(\frac{\Delta}{2s}\right)^{t}.
\]
And since $\mathbb{E}_{e\sim\mu}[|e|]\le t\le\frac{\Delta}{4}$ and by Markov's inequality:
\[
\sum_{e\in\mathcal{H}\atop |e|>\Delta/2}\left(\frac{\Delta}{2s}\right)^{|e|}\mu(e)
\le \left(\frac{\Delta}{2s}\right)^{\Delta/2}\Pr_{e\sim\mu}\left[|e|>\frac{\Delta}{2}\right]
\le \frac{1}{2}\left(\frac{\Delta}{2s}\right)^{2t}.
\]
Therefore, we have
\[
\mathbb{E}[\mu(\mathcal{H}_\Gamma)]\ge \sum_{e\in\mathcal{H}\atop |e|\le\frac{\Delta}{2}}\left(\frac{\Delta}{2s}\right)^{|e|}\mu(e)\ge\left(\frac{\Delta}{2s}\right)^{t}-\frac{1}{2}\left(\frac{\Delta}{2s}\right)^{2t}\ge\frac{1}{2}\left(\frac{\Delta}{2s}\right)^{t}.
\]
By the probabilistic method, there must exists a $\Gamma\subseteq[s]$ of size $\Delta$ such that $\mu(\mathcal{H}_\Gamma)\ge\frac{1}{2}\left(\frac{\Delta}{2s}\right)^{t}$.
\end{proof}
}

\subsection{Proof of the average-case richness lemma (Lemma~\ref{lemma-richness-average})}


By fixing random bits, it is sufficient to consider only deterministic cell-probing algorithms.

The high level idea of the proof is simple.
Fix a table $T_y$. A procedure called the ``cell-sampling procedure'' chooses the subset $\Gamma$ of $\Delta$ many cells that resolve the maximum amount of positive queries. This associates each database $y$ to a string $\omega=(\Gamma,T_y[\Gamma])$, which we call a \concept{certificate},  where $T_y[\Gamma]=(T_y[i])_{i\in\Gamma}$ represent the contents of the cells in $\Gamma$. Due to the nature of the cell-probing algorithm, once the certificate is fixed, the set of queries it can resolve is fixed. We also observe that if the density of 1's in the problem $f$ is $\Omega(1)$, then there is a $\Omega(1)$-fraction of good databases $y$ such that amount of positive queries resolved by the certificate $\omega$ constructed by the cell-sampling procedure is at least an $(\frac{\Delta}{s})^{O(t)}$-fraction of all queries.
On the other hand, since $\omega\in\binom{[s]}{\Delta}\times\{0,1\}^{\Delta w}$ there are at most $\binom{s}{\Delta}2^{\Delta w}=2^{O(\Delta\ln\frac{s}{\Delta}+\Delta w)}$ many certificates $\omega$. Therefore, at least $2^{-O(\Delta\ln\frac{s}{\Delta}+\Delta w)}$-fraction of good databases (which is at least $2^{-O(\Delta\ln\frac{s}{\Delta}+\Delta w)}$-fraction of all databases) are associated with the same $\omega$. Pick this popular certificate $\omega$, the positive queries that $\omega$ resolves together with the good databases that $\omega$ is associated with form the large monochromatic 1-rectangle.

Now we proceed to the formal parts of the proof.
Given a database $y\in Y$, let $X_y^+=\{x\in X\mid f(x,y)=1\}$ denote the set of positive queries on $y$. 
We use $\mu_y^+=\mu_{X_y^+}$ to denote the distribution induced by $\mu$ on $X_y^+$.

Let $P_{xy}\subseteq[s]$ denote the set of cells probed by the algorithm to resolve query $x$ on database $y$. 
Fix a database $y\in Y$. Let $\Gamma\subseteq[s]$ be a subset of cells. We say a query $x\in X$ is resolved by $\Gamma$ if $x$ can be resolved by probing only cells in $\Gamma$ on the table storing database $y$, i.e.~if $P_{xy}\subseteq\Gamma$. We denote by 
\[
X_y^+(\Gamma)=\{x\in X_y^+\mid  P_{xy}\subseteq \Gamma\}
\]
the set of positive queries resolved by $\Gamma$ on database $y$.
Assume two databases $y$ and $y'$ are \emph{indistinguishable} over $\Gamma$: meaning that for the tables $T_y$ and $T_{y'}$ storing $y$ and $y'$ respectively, the cell contents $T_y[i]=T_{y'}[i]$ for all $i\in\Gamma$. Then due to the determinism of the cell-probing algorithm, we have $X_y^+(\Gamma)=X_{y'}^+(\Gamma)$, i.e.~$\Gamma$ resolve the same set of positive queries on both databases.


\paragraph*{The cell-sampling procedure:}
Fix a database $y\in Y$ and any $\Delta\in\left[{32t}/{\alpha^2}, s\right]$.
Suppose we have a \emph{cell-sampling procedure} which does the following:
The {procedure} deterministically\footnote{Being deterministic here means that the chosen set $\Gamma_y^*$ is a function of $y$.} chooses a unique $\Gamma\subseteq[s]$ such that $|\Gamma|=\Delta$ and the measure $\mu(X_y^+(\Gamma))$ of positive queries resolved by $\Gamma$ is maximized (and if there are more than one such $\Gamma$, the procedure chooses an arbitrary one of them). We use $\Gamma_y^*$ to denote this set of cells chosen by the cell-sampling procedure.
We also denote by $X_y^*=X_y^+(\Gamma_y^*)$ the set of positive queries resolved by this chosen set of cells.

On each database $y$, the cell-sampling procedure chooses for us the most informative set $\Gamma$ of cells of size $|\Gamma|=\Delta$ that resolve the maximum amount of positive queries.
We use $\omega_y=(\Gamma_y^*,T_y[\Gamma_y^*])$
to denote the contents (along with addresses) of the cells chosen by the cell-sampling procedure for database $y$. We call such $\omega_y$ a \concept{certificate} chosen by the cell-sampling procedure for $y$.

Let $y$ and $y'$ be two databases.
A simple observation is that if two databases $y$ and $y'$ have the same certificate $\omega_y=\omega_{y'}$ chosen by the cell-sampling procedure, then the respective sets $X_y^*, X_{y'}^*$ of positive queries resolved on the certificate are going to be the same as well.
\begin{proposition}\label{proposition-indistinguishability}
For any databases $y,y'\in Y$, if $\omega_y=\omega_{y'}$ then $X_y^*=X_{y'}^*$.
\end{proposition}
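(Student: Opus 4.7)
The plan is to unpack the hypothesis $\omega_y = \omega_{y'}$ and reduce the claim to the indistinguishability observation recorded immediately before the proposition. By definition $\omega_y = (\Gamma_y^*, T_y[\Gamma_y^*])$ and $\omega_{y'} = (\Gamma_{y'}^*, T_{y'}[\Gamma_{y'}^*])$, so the assumed equality forces $\Gamma_y^* = \Gamma_{y'}^*$; call this common set $\Gamma$. It also forces $T_y[i] = T_{y'}[i]$ for every $i \in \Gamma$, which is exactly the statement that $y$ and $y'$ are indistinguishable over $\Gamma$. The observation from the preceding paragraph then yields $X_y^+(\Gamma) = X_{y'}^+(\Gamma)$, and since $X_y^* = X_y^+(\Gamma_y^*) = X_y^+(\Gamma)$ and $X_{y'}^* = X_{y'}^+(\Gamma_{y'}^*) = X_{y'}^+(\Gamma)$, the desired equality $X_y^* = X_{y'}^*$ follows.

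To make the argument self-contained, I would spell out why $X_y^+(\Gamma) = X_{y'}^+(\Gamma)$ whenever $T_y$ and $T_{y'}$ agree on $\Gamma$: fix any $x \in X$ and run the deterministic decision tree on $(x,y)$ and $(x,y')$ in lockstep. Each cell probed is a function of $x$ together with the contents of previously probed cells, so as long as every cell probed so far lies in $\Gamma$ the two executions read identical contents and therefore choose the same next cell and the same output. Hence $P_{xy} \subseteq \Gamma$ implies that the execution on $(x,y')$ follows the identical trajectory, giving $P_{xy'} = P_{xy} \subseteq \Gamma$ and the same final answer; by correctness of the cell-probing scheme this forces $f(x,y) = f(x,y')$. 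Symmetrically in $y,y'$, so $x \in X_y^+(\Gamma)$ iff $x \in X_{y'}^+(\Gamma)$. The only subtle point — and the nearest thing to an obstacle — is precisely this lockstep argument, making sure that ``resolved by $\Gamma$'' and the output itself depend on $y$ only through $T_y[\Gamma]$; once that is noted, the proposition is a direct unfolding of definitions.
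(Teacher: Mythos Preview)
Your proof is correct and follows exactly the approach implicit in the paper: the proposition is stated there without proof, labeled a ``simple observation,'' and is meant to follow immediately from the indistinguishability remark $X_y^+(\Gamma)=X_{y'}^+(\Gamma)$ made a few paragraphs earlier, just as you argue. Your lockstep decision-tree justification merely makes explicit what the paper summarizes as ``due to the determinism of the cell-probing algorithm.''
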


Let $\tau(x,y)=|P(x,y)|$ denote the number of cell-probes to resolve query $x$ on database $y$. 
By the assumption of the lemma, $\mathbb{E}_{\mu\times\nu}[\tau(\boldsymbol{x},\boldsymbol{y})]\le t$ for the inputs $(\boldsymbol{x},\boldsymbol{y})$ sampled from the product distribution $\mu\times\nu$.
We claim that there are many ``good'' columns (databases) with high density of 1's and low average cell-probe costs. 
\begin{claim}\label{claim-good-database}
There is a collection $Y_{\mathsf{good}}\subseteq Y$ of substantial amount of good databases, such that $\nu(Y_{\mathsf{good}})\ge\frac{\alpha}{4}$ and for every $y\in Y_{\mathsf{good}}$, the followings are true:
\begin{itemize}
\item the amount of positive queries is large: $\mu(X_y^+)\ge\frac{\alpha}{2}$;
\item the average cell-probe complexity \emph{among positive queries} is bounded:
\[
\mathbb{E}_{\boldsymbol{x}\sim\mu_y^+}[\tau(\boldsymbol{x},y)]\le\frac{8t}{\alpha^2}.
\]
\end{itemize}
\end{claim}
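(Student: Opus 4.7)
The plan is to identify $Y_{\mathsf{good}}$ as the intersection of two natural ``good sets,'' each obtained by a Markov-type argument from one of the two hypotheses in play, namely the $\alpha$-density of $f$ and the expected probe bound $\mathbb{E}_{\mu\times\nu}[\tau(\boldsymbol{x},\boldsymbol{y})]\le t$. Introduce the two database-conditional statistics $p(y):=\mu(X_y^+)$ and $T(y):=\mathbb{E}_{\boldsymbol{x}\sim\mu_y^+}[\tau(\boldsymbol{x},y)]$. Rewriting the hypotheses in terms of these functions gives, first, $\mathbb{E}_{\boldsymbol{y}\sim\nu}[p(\boldsymbol{y})]\ge \alpha$, and, second (decomposing the probe expectation according to whether the query is positive and discarding the nonnegative contribution from negative queries), $\mathbb{E}_{\boldsymbol{y}\sim\nu}[p(\boldsymbol{y})\,T(\boldsymbol{y})]\le t$. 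Both bullets of the claim are then a statement about these two functions on $Y$, and the proof becomes a tidy two-step extraction of a $\nu$-fat subset on which both $p$ is not too small and $T$ is not too large.

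For the first step, let $Y_1:=\{y\in Y: p(y)\ge \alpha/2\}$. Since $p\le 1$ pointwise, a reverse-Markov split yields $\alpha\le\mathbb{E}[p(\boldsymbol{y})]\le \nu(Y_1)\cdot 1+\nu(Y\setminus Y_1)\cdot(\alpha/2)\le \nu(Y_1)+\alpha/2$, so $\nu(Y_1)\ge\alpha/2$. This already secures the first bullet on $Y_1$, at half of the target mass $\alpha/4$. For the second step, I use the density lower bound on $Y_1$ to transfer the joint probe bound into an expectation bound on $T$ alone: for $y\in Y_1$ we have $p(y)\ge\alpha/2$, hence
\[
\mathbb{E}_{\boldsymbol{y}\sim\nu}\bigl[\mathbf{1}[\boldsymbol{y}\in Y_1]\,T(\boldsymbol{y})\bigr]\le \frac{2}{\alpha}\,\mathbb{E}_{\boldsymbol{y}\sim\nu}\bigl[p(\boldsymbol{y})\,T(\boldsymbol{y})\bigr]\le \frac{2t}{\alpha}.
\]
Applying Markov's inequality to $T$ on $Y_1$ shows that the $\nu$-mass of $\{y\in Y_1: T(y)>8t/\alpha^2\}$ is at most $(2t/\alpha)/(8t/\alpha^2)=\alpha/4$. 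Setting $Y_{\mathsf{good}}:=\{y\in Y_1: T(y)\le 8t/\alpha^2\}$ then gives $\nu(Y_{\mathsf{good}})\ge \nu(Y_1)-\alpha/4\ge \alpha/4$, and both bullets hold by construction.

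There is essentially no obstacle; the one subtlety worth watching is the passage from the unconditional probe expectation to the conditional expectation $\mathbb{E}_{\boldsymbol{x}\sim\mu_y^+}[\tau]$, which is exactly what motivates the $1/\alpha^2$ (rather than $1/\alpha$) loss in the second bullet: one factor of $1/\alpha$ arises from restricting the probe budget to positive queries via $p(y)\ge\alpha/2$ on $Y_1$, and the other from the Markov step used to prune the $T$-large part of $Y_1$ while retaining $\nu$-mass $\ge \alpha/4$. The choice of the threshold $8t/\alpha^2$ is then tuned precisely so that the Markov loss ($\alpha/4$) fits inside the density gain ($\alpha/2$) with room to spare.
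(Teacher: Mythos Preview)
Your proof is correct and follows essentially the same approach as the paper's. The paper also first extracts $Y_{\mathsf{dense}}=\{y:\mu(X_y^+)\ge\alpha/2\}$ with $\nu(Y_{\mathsf{dense}})\ge\alpha/2$, then uses a Markov step to pass to $Y_{\mathsf{good}}$; the only cosmetic difference is that the paper applies Markov to the \emph{unconditional} probe average $\mathbb{E}_{\boldsymbol{x}\sim\mu}[\tau(\boldsymbol{x},y)]$ first and converts to the conditional expectation $\mathbb{E}_{\boldsymbol{x}\sim\mu_y^+}[\tau(\boldsymbol{x},y)]$ afterward, whereas you convert to the conditional quantity $T(y)$ up front via the identity $\mathbb{E}_\nu[p(\boldsymbol{y})T(\boldsymbol{y})]\le t$ and then apply Markov.
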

\begin{proof}
The claim is proved by a series of averaging principles. First consider $Y_{\mathsf{dense}}=\{y\in Y\mid \mu(X_y^+)\ge\frac{\alpha}{2}\}$ the set of databases with at least $\frac{\alpha}{2}$-density of positive queries. By the averaging principle, we have $\nu(Y_{\mathsf{dense}})\ge \alpha/2$. Since $\mathbb{E}[\tau(\boldsymbol{x},\boldsymbol{y})]\ge\nu(Y_{\mathsf{dense}})\mathbb{E}[\tau(\boldsymbol{x},\boldsymbol{y})\mid y\in Y_{\mathsf{dense}}]$, we have $\mathbb{E}_{\mu\times\nu_{\mathsf{dense}}}[\tau(\boldsymbol{x},\boldsymbol{y})]\le \frac{2t}{\alpha}$, where $\nu_{\mathsf{dense}}=\nu_{Y_{\mathsf{dense}}}$ is the distribution induced by $\nu$ on $Y_{\mathsf{dense}}$. We then construct $Y_{\mathsf{good}}\subseteq Y_{\mathsf{dense}}$ as the set of $y\in Y_{\mathsf{dense}}$ with average cell-probe complexity bounded as $\mathbb{E}_{\boldsymbol{x}\sim\mu}[\tau(\boldsymbol{x},y)]\le\frac{4t}{\alpha}$. By Markov inequality $\nu_{\mathsf{dense}}(Y_{\mathsf{good}})\ge\frac{1}{2}$ and hence $\nu(Y_{\mathsf{good}})\ge\frac{\alpha}{4}$. Note that $\mathbb{E}_{\boldsymbol{x}\sim\mu}[\tau(\boldsymbol{x},y)]\ge \mathbb{E}_{\boldsymbol{x}\sim\mu_y^+}[\tau(\boldsymbol{x},y)]\mu(X_y^+)$. We have $\mathbb{E}_{\boldsymbol{x}\sim\mu_y^+}[\tau(\boldsymbol{x},y)]\le\mathbb{E}_{\boldsymbol{x}\sim\mu}[\tau(\boldsymbol{x},y)]/\mu(X_y^+)\le\frac{8t}{\alpha^2}$ for all $y\in Y_{\mathsf{good}}$. 
\end{proof}

For the rest, we consider only these good databases.
Fix any $\Delta\in\left[{32t}/{\alpha^2}, s\right]$. We claim that for every good database $y\in Y_{\mathsf{good}}$, the cell-sampling procedure always picks a subset $\Gamma_y^*\subseteq[s]$ of $\Delta$ many cells, which can resolve a substantial amount of positive queries:
\begin{claim}\label{claim-informative-cells}
For every $y\in Y_{\mathsf{good}}$, it holds that $\mu(X_y^*)\ge\frac{\alpha}{4}\left(\frac{\Delta}{2s}\right)^{8t/\alpha^2}$.
\end{claim}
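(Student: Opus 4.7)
The natural plan is to cast the cell-sampling task for a fixed $y \in Y_{\mathsf{good}}$ as an instance of the combinatorial subhypergraph selection problem, taking the vertex set to be the $s$ cells of the table and the hyperedges to be the probe sets $P_{xy}$ of positive queries. The key fact we exploit is that the cell-sampling procedure picks $\Gamma_y^*$ to be the \emph{optimal} size-$\Delta$ subset, so any $\Gamma$ of size $\Delta$ whose existence we can establish gives a lower bound on $\mu(X_y^*)$.

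First, I would build the hypergraph $\mathcal{H} \subseteq 2^{[s]}$ whose edges are the distinct sets appearing as $P_{xy}$ for some $x \in X_y^+$, and equip it with the distribution $\mu^*$ obtained by pushing forward $\mu_y^+$ along the map $x \mapsto P_{xy}$. By Claim~\ref{claim-good-database}, the expected edge size under $\mu^*$ is
\[
\mathbb{E}_{\boldsymbol{e} \sim \mu^*}[|\boldsymbol{e}|] \;=\; \mathbb{E}_{\boldsymbol{x}\sim\mu_y^+}[\tau(\boldsymbol{x},y)] \;\le\; \frac{8t}{\alpha^2}.
\]
Setting $t' := 8t/\alpha^2$, the assumption $\Delta \ge 32t/\alpha^2$ is exactly $\Delta \ge 4t'$, so the subhypergraph lemma (applied with parameter $t'$) produces a subset $\Gamma \subseteq [s]$ with $|\Gamma| = \Delta$ and
\[
\mu^*(\mathcal{H}_\Gamma) \;\ge\; \frac{1}{2}\left(\frac{\Delta}{2s}\right)^{8t/\alpha^2}.
\]

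Next I would unwind the pushforward. By construction, $\mu^*(\mathcal{H}_\Gamma)$ is precisely $\mu_y^+(X_y^+(\Gamma))$, the conditional $\mu$-mass of positive queries whose probe set lies inside $\Gamma$. Multiplying by $\mu(X_y^+) \ge \alpha/2$ (again from Claim~\ref{claim-good-database}) converts this to the unconditional measure,
\[
\mu(X_y^+(\Gamma)) \;=\; \mu(X_y^+)\cdot\mu_y^+(X_y^+(\Gamma)) \;\ge\; \frac{\alpha}{4}\left(\frac{\Delta}{2s}\right)^{8t/\alpha^2}.
\]
Since $\Gamma_y^*$ is chosen by the cell-sampling procedure to maximize $\mu(X_y^+(\cdot))$ over all size-$\Delta$ subsets, we conclude $\mu(X_y^*) = \mu(X_y^+(\Gamma_y^*)) \ge \mu(X_y^+(\Gamma))$, which yields the claim.

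The only subtleties are (i) keeping track of the renormalization between $\mu$ and $\mu_y^+$, and (ii) verifying that the range hypothesis $4t' \le \Delta \le s$ of the subhypergraph lemma is in force, which is exactly the range imposed on $\Delta$ in the claim. I do not foresee a real obstacle: the statement is essentially a clean wrapper combining the density and average-probe bounds of Claim~\ref{claim-good-database} with the subhypergraph lemma via the pushforward described above.
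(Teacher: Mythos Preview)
Your proposal is correct and follows essentially the same approach as the paper: construct the hypergraph of probe sets with the pushforward measure of $\mu_y^+$, invoke the subhypergraph lemma (Lemma~\ref{lemma-subhypergraph}) with parameter $t'=8t/\alpha^2$ using the bound from Claim~\ref{claim-good-database}, then convert from $\mu_y^+$ back to $\mu$ via $\mu(X_y^+)\ge\alpha/2$ and use the optimality of $\Gamma_y^*$. Your explicit verification that $\Delta\ge 32t/\alpha^2$ gives the range condition $\Delta\ge 4t'$ is a detail the paper leaves implicit.
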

\begin{proof}
Fix any good database $y\in Y_{\mathsf{good}}$. We only need to prove there exists a $\Gamma\subseteq[s]$ with $|\Gamma|=\Delta$ that resolve positive queries $\mu(X_y^+(\Gamma))\ge\frac{\alpha}{4}\left(\frac{\Delta}{2s}\right)^{8t/\alpha^2}$. The claims follows immediately.

We construct a hypergraph $\mathcal{H}\subseteq 2^{[s]}$ with vertex set $[s]$ as $\mathcal{H}=\{P_{xy}\mid x\in X_y^+\}$, so that each positive queries $x\in X_y^+$ on database $y$ is associated (many-to-one) to a hyperedge $e\in\mathcal{H}$ such that $e=P_{xy}$ is precisely the set of cells probed by the cell-probing algorithm to resolve query $x$ on database $y$.

We also define a measure $\tilde\mu$ over hyperedges $e\in\mathcal{H}$ as the total measure (in $\mu_y^+$) of the positive queries $x$ associated to $e$. Formally, for every $e\in\mathcal{H}$, 
\[
\tilde{\mu}(e)=\sum_{x\in X_y^+:P_{xy}=e}\mu_y^+(x).
\] 
Since $\sum_{e\in\mathcal{H}}\tilde{\mu}(e)=\sum_{x\in X_y^+}\mu_y^+(x)=1$, this $\tilde{\mu}$ is a well-defined probability distribution over hyperedges in $\mathcal{H}$. 
Moreover, recalling that $\tau(x,y)=|P_{xy}|$, the the average size of hyperedges
\[
\mathbb{E}_{\boldsymbol{e}\sim\tilde{\mu}}[|\boldsymbol{e}|]=\mathbb{E}_{\boldsymbol{x}\sim\mu_y^+}[\tau(\boldsymbol{x},y)]\le \frac{8t}{\alpha^2}.
\]
By the probabilistic method (whose proof is in the full paper~\cite{fullpaper}), there must exist a $\Gamma\subseteq[s]$ of size $|\Gamma|=\Delta$, such that the sub-hypergraph $\mathcal{H}_\Gamma$ induced by $\Gamma$ has 
\[
\tilde{\mu}(\mathcal{H}_\Gamma)\ge\frac{1}{2}\left(\frac{\Delta}{2s}\right)^{8t/\alpha^2}.
\]
By our construction of $\mathcal{H}$, the positive queries associated (many-to-one) to the hyperedges in the induced sub-hypergraph $\mathcal{H}_\Gamma=\{P_{xy}\mid x\in X_y^+\wedge P_{xy}\subseteq\Gamma\}$ are precisely those positive queries in $X_y^+(\Gamma)=\{x\in X_y^+\mid P_{xy}\subseteq\Gamma\}$.
Therefore,
\[
\mu_y^+(X_y^+(\Gamma))=\sum_{x\in X_y^+, P_{xy}\subseteq\Gamma}\mu_y^+(x)=\tilde{\mu}(\mathcal{H}_\Gamma)\ge\frac{1}{2}\left(\frac{\Delta}{2s}\right)^{8t/\alpha^2}.
\]
Recall that $\mu(X_y^+)\ge\frac{\alpha}{2}$ for every $y\in Y_{\mathsf{good}}$. And since $X_y^+(\Gamma)\subseteq X_y^+$, we have
\[
\mu(X_y^+(\Gamma))=\mu_y^+(X_y^+(\Gamma))\mu(X_y^+)\ge\frac{\alpha}{4}\left(\frac{\Delta}{2s}\right)^{8t/\alpha^2}.
\]
The claim is proved.
\end{proof}

Recall that the certificate $\omega_y=(\Gamma_y^*,T_y[\Gamma_y^*])$ is constructed by the cell-sampling procedure for database $y$.
For every possible assignment $\omega\in\binom{[s]}{\Delta}\times\{0,1\}^{\Delta w}$ of certificate, let $Y_\omega$ denote the set of good databases $y\in Y_{\mathsf{good}}$ with this certificate $\omega_y=\omega$.
Due to the determinism of the cell-sampling procedure, this classifies the $Y_{\mathsf{good}}$ into at most $\binom{s}{\Delta}2^{\Delta w}$ many disjointed subclasses $Y_\omega$. Recall that $\nu(Y_{\mathsf{good}})\ge\frac{\alpha}{4}$. By the averaging principle, the following proposition is natural.
\begin{proposition}\label{prop-mono-rectangle-cols}
There exists a certificate $\omega\in\binom{[s]}{\Delta}\times\{0,1\}^{\Delta w}$, denoted as $\omega^*$, such that 
\[
\nu(Y_{\omega^*})\ge\frac{\alpha}{4\binom{s}{\Delta}2^{\Delta w}}.
\]
\end{proposition}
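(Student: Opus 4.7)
The proposition is an immediate pigeonhole statement, so the plan is short. First I would observe that since the cell-sampling procedure is deterministic (the chosen set $\Gamma_y^*$ is a function of $y$, and consequently so is $\omega_y = (\Gamma_y^*, T_y[\Gamma_y^*])$), the sets $\{Y_\omega\}_{\omega}$ form a partition of $Y_{\mathsf{good}}$ as $\omega$ ranges over all possible certificates. In particular, $\sum_{\omega} \nu(Y_\omega) = \nu(Y_{\mathsf{good}})$.

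Next I would bound the number of possible certificates: any $\omega \in \binom{[s]}{\Delta} \times \{0,1\}^{\Delta w}$ is determined by the choice of a $\Delta$-subset of $[s]$ (at most $\binom{s}{\Delta}$ options) together with the $\Delta w$ bits of cell contents (at most $2^{\Delta w}$ options), so the total number of certificates is at most $\binom{s}{\Delta} 2^{\Delta w}$. Combining this with Claim~\ref{claim-good-database}, which guarantees $\nu(Y_{\mathsf{good}}) \ge \alpha/4$, the averaging principle yields some certificate $\omega^*$ with
\[
\nu(Y_{\omega^*}) \;\ge\; \frac{\nu(Y_{\mathsf{good}})}{\binom{s}{\Delta} 2^{\Delta w}} \;\ge\; \frac{\alpha}{4\binom{s}{\Delta} 2^{\Delta w}},
\]
as claimed. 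There is no real obstacle here; the only subtlety worth emphasizing is that the well-definedness of the partition relies on the determinism stipulated in the definition of the cell-sampling procedure (ties being broken in a fixed manner), which is precisely what makes $\omega_y$ a function of $y$ alone and hence the classes $Y_\omega$ pairwise disjoint.
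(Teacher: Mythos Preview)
Your proposal is correct and matches the paper's own reasoning essentially verbatim: the paper notes that determinism of the cell-sampling procedure partitions $Y_{\mathsf{good}}$ into at most $\binom{s}{\Delta}2^{\Delta w}$ disjoint classes $Y_\omega$, recalls $\nu(Y_{\mathsf{good}})\ge\alpha/4$ from Claim~\ref{claim-good-database}, and invokes the averaging principle. Your explicit mention of tie-breaking to ensure the partition is well-defined is a nice clarification but not a departure from the intended argument.
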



On the other hand, fixed any $\omega$, since all databases $y\in Y_\omega$ have the same $\omega_y^*$, by Proposition~\ref{proposition-indistinguishability} they must have the same $X_y^*$. We can abuse the notation and write $X_\omega=X_y^*$ for all $y\in Y_\omega$.



Now we let $A=X_{\omega^*}$ and $B=Y_{\omega^*}$, where $\omega^*$ satisfies Proposition~\ref{prop-mono-rectangle-cols}. Due to Claim~\ref{claim-informative-cells} and Proposition~\ref{prop-mono-rectangle-cols}, we have
\[
\mu(A)\ge\frac{\alpha}{4}\left(\frac{\Delta}{2s}\right)^{8t/\alpha^2}=\alpha\cdot \left(\frac{\Delta}{s}\right)^{O(t/\alpha^2)}
\quad\text{ and }\quad
\nu(B)\ge\frac{\alpha}{4\binom{s}{\Delta}2^{\Delta w}}={\alpha}\cdot{2^{-O\left(\Delta \ln\frac{s}{\Delta}+\Delta w\right)}}.
\]
Note for every $y\in B=Y_{\omega^*}$, the $A=X_{\omega^*}=X_y^+(\Gamma_y^*)$ is a set of positive queries on database $y$, thus $A\times B$ is a monochromatic 1-rectangle in $f$. This finishes the proof of Lemma~\ref{lemma-richness-average}.

\newcommand{\distance}{\mathit{dist}}

\section{Rectangles in conjunction problems}

Many natural data structure problems can be expressed as a conjunction of point-wise relations between the query point and database points.
Consider data structure problem $f:X\times Y\to\{0,1\}$.
Let $Y=\mathcal{Y}^n$, so that each database $y\in Y$ is a tuple $y=(y_1,y_2,\ldots,y_n)$ of $n$ points from $\mathcal{Y}$. A \concept{point-wise function} $g:X\times\mathcal{Y}\to\{0,1\}$ is given.  The data structure problem $f$ is defined as the conjunction of these subproblems:
\[
\forall x\in X, \forall y=(y_1,y_2,\ldots,y_n)\in Y, \quad f(x,y)=\bigwedge_{i=1}^n g(x,y_i),
\]
Many natural data structure problems can be defined in this way, for example:
\begin{itemize}
\item Membership query: $X=\mathcal{Y}$ is a finite domain. The point-wise function $g(\cdot,\cdot)$ is $\neq$ that indicates whether the two points are unequal. 
\item $(\gamma, \lambda)$-approximate near-neighbor $(\gamma,\lambda)\text{-}\ANN^{n}_X$: $X=\mathcal{Y}$ is a metric space with distance $\mathrm{dist}(\cdot,\cdot)$. The point-wise function $g$ is defined as: for $x,z\in X$, $g(x,z)=1$ if $\mathrm{dist}(x,z)>\gamma\lambda$, or $g(x,z)=0$ if $\mathrm{dist}(x,z)\le \lambda$. The function value can arbitrary for all other cases.
\item Partial match $\PM^{d,n}_\Sigma$: $\Sigma$ is an alphabet, $\mathcal{Y}=\Sigma^d$ and $X=(\Sigma\cup\{\star\})^d$. The point-wise function $g$ is defined as: for $x\in X$ and $z\in \mathcal{Y}$, $g(x,z)=1$ if there is an $i\in[d]$ such that $x_i\not\in\{\star,z_i\}$,  or $g(x,z)=0$ if otherwise. 
\end{itemize}

We show that refuting the large rectangles in the point-wise function $g$ can give us lower bounds for the conjunction problem $f$.

Let $\mu,\nu$ be distributions over $X$ and $\mathcal{Y}$ respectively, and let $\nu^n$ be the product distribution on $Y=\mathcal{Y}^n$. Let $g:X\times \mathcal{Y}\rightarrow \{0,1\}$ be a point-wise function and $f:X\times Y\to\{0,1\}$ a data structure problem defined by the conjunction of $g$ as above.
\begin{lemma}\label{lemma-richness-conjunction}
For $f,g,\mu,\nu$ defined as above,
assume that there is a deterministic or randomized Las Vegas cell-probing scheme solving $f$ on a table of $s$ cells, each cell containing $w$ bits, with expected $t$ cell-probes under input distribution $\mu\times\nu^n$.
If the followings are true:
\begin{itemize}
\item the density of 0's in $g$ is at most $\frac{\beta}{n}$ under distribution $\mu\times\nu$ for some constant $\beta<1$;
\item $g$ does not contain monochromatic 1-rectangle of measure at least $\frac{1}{\Phi}\times\frac{1}{\Psi}$ under distribution $\mu\times\nu$;
\end{itemize}
then
\[
\left( \frac{sw}{n\log\Psi}\right)^{O(t)}\ge \Phi\quad\text{ or }\quad t=\Omega\left(\frac{n\log\Psi}{w+\log s}\right).
\]
\end{lemma}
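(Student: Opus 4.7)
The plan is to apply Lemma~\ref{lemma-richness-average} directly to $f$ under $\mu\times\nu^n$ and then convert the resulting 1-rectangle of $f$ into a 1-rectangle of $g$ by projecting onto a single coordinate.

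First I verify that $f$ is $\alpha$-dense under $\mu\times\nu^n$ for a constant $\alpha$. By a union bound over the $n$ coordinates,
\[
\Pr_{\mu\times\nu^n}[f(\boldsymbol{x},\boldsymbol{y})=0]\le \sum_{i=1}^n\Pr_{\mu\times\nu}[g(\boldsymbol{x},\boldsymbol{y}_i)=0]\le n\cdot\tfrac{\beta}{n}=\beta,
\]
so $f$ is $(1-\beta)$-dense and $\alpha=1-\beta>0$ is a constant. Invoking Lemma~\ref{lemma-richness-average} with any parameter $\Delta\in[32t/\alpha^2,s]$ then produces a monochromatic 1-rectangle $A\times B\subseteq X\times\mathcal{Y}^n$ in $f$ with $\mu(A)\ge\alpha\cdot(\Delta/s)^{O(t)}$ and $\nu^n(B)\ge\alpha\cdot 2^{-O(\Delta\ln(s/\Delta)+\Delta w)}$.

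Next I convert this into a rectangle in $g$. For each $i\in[n]$, let $B_i\subseteq\mathcal{Y}$ be the projection of $B$ onto coordinate $i$. Since $f(x,y)=\bigwedge_i g(x,y_i)$ is identically $1$ on $A\times B$, each product $A\times B_i$ is a monochromatic 1-rectangle in $g$. The key quantitative point is that although the naive bound $\nu(B_i)\ge\nu^n(B)$ would lose a factor of $n$ in the exponent, the containment $B\subseteq B_1\times\cdots\times B_n$ gives $\prod_i\nu(B_i)\ge\nu^n(B)$, so by the AM--GM inequality there exists $i^*\in[n]$ with
\[
\nu(B_{i^*})\ge \nu^n(B)^{1/n}\ge \alpha^{1/n}\cdot 2^{-O\bigl(\tfrac{\Delta}{n}(\ln(s/\Delta)+w)\bigr)}.
\]
Recognising this $n$-fold gain in the exponent is the main conceptual obstacle; it is exactly what produces the $n\log\Psi$ term appearing in the conclusion, and it is the step where the conjunction structure of $f$ is genuinely used.

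Finally I combine this with the rectangle-refutation hypothesis on $g$: either $\mu(A)<1/\Phi$ or $\nu(B_{i^*})<1/\Psi$. I would set $\Delta=c\cdot n\log\Psi/(w+\log s)$ for a sufficiently small constant $c$ so that $\tfrac{\Delta}{n}(\ln(s/\Delta)+w)=O(\log\Psi)$ with a hidden constant smaller than that forced by $1/\Psi$, which rules out the second alternative. The first alternative $\mu(A)<1/\Phi$ must then hold, and substituting the chosen value of $\Delta$ into $\alpha(\Delta/s)^{O(t)}<1/\Phi$ rearranges (absorbing the constant $\alpha$ and the $\log(w+\log s)$ term into the $O(\cdot)$ in the exponent) to $\bigl(sw/(n\log\Psi)\bigr)^{O(t)}\ge\Phi$, the first stated conclusion. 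The only side condition is the feasibility constraint $\Delta\ge 32t/\alpha^2$ required by Lemma~\ref{lemma-richness-average}; if this fails for our choice of $\Delta$, then already $t=\Omega(n\log\Psi/(w+\log s))$, which is the second conclusion.
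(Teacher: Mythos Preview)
Your proposal is correct and follows essentially the same route as the paper: verify $\alpha$-density of $f$ by a union bound, apply Lemma~\ref{lemma-richness-average}, convert the resulting $1$-rectangle $A\times B$ in $f$ into a $1$-rectangle in $g$ whose column-measure is at least $\nu^n(B)^{1/n}$, and then choose $\Delta$ to force the $\mu(A)<1/\Phi$ alternative. The only cosmetic difference is in the conversion step---the paper takes the maximal column set $C=\{z\in\mathcal{Y}:g(x,z)=1\text{ for all }x\in A\}$ and uses $B\subseteq C^n$ to get $\nu(C)\ge\nu^n(B)^{1/n}$, whereas you project $B$ onto a coordinate and take the best $B_{i^*}$; since each $B_i\subseteq C$, both arguments yield the same inequality and the subsequent arithmetic is identical.
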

\begin{proof}
By union bound, the density of 0's in $f$ under distribution $\mu\times\nu^n$ is:
\[
\Pr_{\genfrac{}{}{0pt}{}{x\sim\mu}{y=(y_1,\ldots,y_n)\sim\nu^n}}\left[\bigwedge_{i=1}^n g(x,y_i)=0\right]
\le n\cdot \Pr_{\genfrac{}{}{0pt}{}{x\sim\mu}{z\sim\nu}}[g(x,z)=0]\le n\cdot\frac{\beta}{n}=\beta.
\]

By Lemma~\ref{lemma-richness-average}, the $\Omega(1)$-density of 1's in $f$ and the assumption of existing a cell-probing scheme with parameters $s$, $w$ and $t$, altogether imply that for any $4t\le \Delta\le s$, $f$ has a monochromatic 1-rectangle $A\times B$ such that 
\begin{align}
\mu(A)\ge{\left(\frac{\Delta}{s}\right)^{c_1t}}\quad\text{ and }\quad \nu^n(B)\ge2^{-c_2\Delta(\ln\frac{s}{\Delta}+ w)},\label{eq:conjunction-large-rectangles}
\end{align}
for some constants $c_1,c_2>0$ depending only on $\beta$.

Let $C\subset\mathcal{Y}$ be the largest set of columns in $g$ to form a 1-rectangle with $A$. Formally,
\[
C=\{z\in\mathcal{Y}\mid \forall x\in A, g(x,z)=1\}.
\] 
Clearly, for any monochromatic 1-rectangle $A\times D$ in $g$, we must have $D\subseteq C$. 
By definition of $f$  as a conjunction of $g$, it must hold that for all $y=(y_1,y_2,\ldots,y_n)\in B$, none of $y_i\in y$ has $g(x,y_i)=0$ for any $x\in A$, which means $B\subseteq C^n$, and hence
\[
\nu^n(B)\le \nu^n(C^n)=\nu(C)^n.
\]
Recall that $A\times C$ is monochromatic 1-rectangle in $g$. Due to the assumption of the lemma, either $\mu(A)<\frac{1}{\Phi}$ or $\nu(C)<\frac{1}{\Psi}$.
Therefore, either $\mu(A)<\frac{1}{\Phi}$ or $\nu^n(B)<\frac{1}{\Psi^n}$.

We can always choose a $\Delta$ such that $\Delta=O\left(\frac{n\log\Psi}{w}\right)$ and $\Delta=\Omega\left(\frac{n\log\Psi}{w+\log s}\right)$ to satisfy 
\[
2^{-c_2\Delta(\ln\frac{s}{\Delta}+ w)}>\frac{1}{\Psi^n}.
\]
If such $\Delta$ is less than ${32t}/{(1-\beta)^2}$, then we immediately have a lower bound
\[
t=\Omega\left(\frac{n\log\Psi}{w+\log s}\right).
\]
Otherwise, due to~\eqref{eq:conjunction-large-rectangles}, $A\times B$ is monochromatic 1-rectangle in $f$ with $\nu^n(B)>\frac{1}{\Psi^n}$, therefore it must hold that $\mu(A)<\frac{1}{\Phi}$, which by~\eqref{eq:conjunction-large-rectangles} gives us
\[
\frac{1}{\Phi}>\mu(A) \ge{\left(\frac{\Delta}{s}\right)^{O(t)}}={\left(\frac{n\log\Psi}{s w}\right)^{O(t)}},
\]
which gives the lower bound
\[
\left( \frac{sw}{n\log\Psi}\right)^{O(t)}\ge \Phi.
\]
\end{proof}

\section{Isoperimetry and ANN lower bounds}
Given a metric space $X$ with distance $\mathrm{dist}(\cdot,\cdot)$ and $\lambda\ge 0$, we say that two points $x,x'\in X$ are $\lambda$-close if $\mathrm{dist}(x,x')\le\lambda$, and $\lambda$-far if otherwise.
The $\lambda$-neighborhood of a point $x\in X$, denoted by $N_\lambda(x)$, is the set of all points from $X$ which are $\lambda$-close to $x$. Given a point set $A\subseteq X$, we define $N_\lambda(A)=\bigcup_{x\in A}N_{\lambda}(x)$ to be the set of all points which are $\lambda$-close to some point in $A$.

In~\cite{panigrahy2010lower}, a natural notion of metric expansion was introduced.

\begin{definition}[metric expansion~\cite{panigrahy2010lower}]\label{definition-metric-expansion}
Let $X$ be a metric space and $\mu$ a probability distribution over $X$. Fix any radius $\lambda>0$. Define
\[
\Phi(\delta)\triangleq \min_{A\subset X, \mu(A)\le \delta}\frac{\mu(N_\lambda(A))}{\mu(A)}.
\]
The expansion $\Phi$ of the $\lambda$-neighborhoods in $X$ under distribution $\mu$ is defined as the largest $k$ such that for all $\delta\le \frac{1}{2k}$, $\Phi(\delta)\ge k$.
\end{definition}

We now introduce a more refined definition of metric expansion using two parameters $\Phi$ and $\Psi$.
\begin{definition}[$(\Phi,\Psi)$-expanding]\label{expansion-definition}
Let $X$ be a metric space and $\mu$ a probability distribution over $X$.
The $\lambda$-neighborhoods in $X$ are \concept{$(\Phi,\Psi)$-expanding} under distributions $\mu$ if we have $\mu(N_\lambda(A))\ge1-1/\Psi$ for any $A\subseteq X$ that $\mu(A)\ge1/\Phi$.
\end{definition}

The metric expansion defined in~\cite{panigrahy2010lower} is actually a special case of $(\Phi,\Psi)$-expanding:  The expansion of $\lambda$-neighborhoods in a metric space $X$ is $\Phi$ means the $\lambda$-neighborhoods are $(\Phi,2)$-expanding.
The notion of $(\Phi,\Psi)$-expanding allows us to describe a more extremal expanding situation in metric space: The expanding of $\lambda$-neighborhoods does not stop at measure~$1/2$, rather, it can go all the way to be very close to measure~1. This generality may support higher lower bounds for approximate near-neighbor. 

Given a radius $\lambda>0$ and an approximation ratio $\gamma>1$, recall that the $(\gamma,\lambda)$-approximate near neighbor problem $(\gamma,\lambda)\text{-}\mathsf{ANN}_{X}^{n}$ can be defined as a conjunction $f(x,y)=\bigwedge_{i}g(x,y_i)$ of point-wise function $g:X\times X\to\{0,1\}$ where $g(x,z)=0$ if $x$ is $\lambda$-close to $z$; $g(x,z)=1$ if $x$ is $\gamma\lambda$-far from $z$; and $g(x,z)$ is arbitrary for all other cases. Observe that $g$ is actually $(\gamma,\lambda)\text{-}\mathsf{ANN}_{X}^{1}$, the point-to-point version of the $(\gamma,\lambda)$-approximate near neighbor.

The following proposition gives an intrinsic connection between the expansion of metric space and size of monochromatic rectangle in the point-wise near-neighbor relation.
\begin{proposition}
If the $\lambda$-neighborhoods in $X$ are {$(\Phi,\Psi)$-expanding} under distribution $\mu$, then the function $g$ defined as above does not contain a monochromatic 1-rectangle of measure $\ge \frac{1}{\Phi}\times\frac{1.01}{\Psi}$ under distribution $\mu\times\mu$.
\end{proposition}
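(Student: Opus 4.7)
The plan is to assume for contradiction that a monochromatic 1-rectangle $A \times C \subseteq X \times X$ in $g$ exists with $\mu(A) \ge 1/\Phi$ and $\mu(C) \ge 1.01/\Psi$, and then derive a contradiction from the $(\Phi,\Psi)$-expansion hypothesis.

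First I would unpack what being a 1-rectangle in $g$ really means. By the definition of $g$, the value $g(x,z)$ is forced to $0$ precisely when $\mathrm{dist}(x,z)\le\lambda$; it is forced to $1$ when $\mathrm{dist}(x,z)>\gamma\lambda$; and in the intermediate band $\lambda<\mathrm{dist}(x,z)\le\gamma\lambda$ it is a ``don't care'' value which may be set to $1$ to accommodate a 1-rectangle. Hence $A\times C$ being a monochromatic 1-rectangle is equivalent to the purely geometric condition $\mathrm{dist}(x,z)>\lambda$ for every $(x,z)\in A\times C$, which in turn is equivalent to $C\cap N_\lambda(A)=\emptyset$, i.e.\ $C\subseteq X\setminus N_\lambda(A)$.

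Second, I would invoke the expansion hypothesis directly on $A$. Since $\mu(A)\ge1/\Phi$, Definition~\ref{expansion-definition} yields $\mu(N_\lambda(A))\ge1-1/\Psi$. Combined with $C\subseteq X\setminus N_\lambda(A)$, this gives
\[
\mu(C)\le 1-\mu(N_\lambda(A))\le \frac{1}{\Psi},
\]
which contradicts $\mu(C)\ge 1.01/\Psi$. The constant $1.01$ plays no essential role; it is merely slack absorbing the weak inequality in the expansion definition, and any constant strictly greater than $1$ would do.

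I do not expect any real obstacle: the argument is a one-line chain once the correct translation of ``1-rectangle in $g$'' into a statement about $\lambda$-neighborhoods is made. The one subtle point worth flagging is that the relevant neighborhoods are $\lambda$-neighborhoods rather than $\gamma\lambda$-neighborhoods — a 1-rectangle only needs to avoid the forced-$0$ region (distance $\le\lambda$), which is precisely what exploits the approximation slack $\gamma$ and makes the expansion hypothesis applicable at the smaller radius $\lambda$ where expansion is nontrivial.
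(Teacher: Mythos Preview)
Your proposal is correct and follows essentially the same argument as the paper: translate ``monochromatic 1-rectangle in $g$'' into $C\cap N_\lambda(A)=\emptyset$, then apply the $(\Phi,\Psi)$-expansion to force $\mu(C)\le 1/\Psi$. The paper phrases it as a dichotomy (either $\mu(A)<1/\Phi$ or $\mu(B)\le 1/\Psi$) rather than a contradiction, but the content is identical, and your remark about why the relevant radius is $\lambda$ rather than $\gamma\lambda$ is exactly the point.
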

\begin{proof}
Since the $\lambda$-neighborhoods in $X$ are {$(\Phi,\Psi)$-expanding}, for any $A\subseteq X$ with $\mu(A)\ge \frac{1}{\Phi}$, we have $\mu(N_\lambda(A))\ge 1-\frac{1}{\Psi}$. And by definition of $g$, for any monochromatic $A\times B$, it must hold that $B\cap N_\lambda(A)=\emptyset$, i.e.~$B\subseteq X\setminus N_\lambda(A)$. Therefore, either $\mu(A)<\frac{1}{\Phi}$, or $\mu(B)=1-\mu(N_\lambda(A))\le\frac{1}{\Psi}<\frac{1.01}{\Psi}$.
\end{proof}

The above proposition together with Lemma~\ref{lemma-richness-conjunction} immediately gives us the following corollary which reduces lower bounds for near-neighbor problems to the isoperimetric inequalities.

\begin{corollary}\label{corollary-ANN-expansion}
Let $\mu$ be a distribution over a metric space $X$. Let $\lambda>0$ and $\gamma\ge 1$.
Assume that there is a deterministic or randomized Las Vegas cell-probing scheme solving $(\gamma,\lambda)\text{-}\mathsf{ANN}_{X}^{n}$ on a table of $s$ cells, each cell containing $w$ bits, with expected $t$ cell-probes under input distribution $\mu\times\mu^n$.
If the followings are true:
\begin{itemize}
\item
$\mathbb{E}_{x\sim\mu}\left[\mu(N_{\gamma\lambda}(x))\right]\le\frac{\beta}{n}$ for a constant $\beta<1$;
\item
the $\lambda$-neighborhoods in $X$ are $(\Phi,\Psi)$-expanding under distribution $\mu$;
\end{itemize}
then 
\[
\left( \frac{sw}{n\log\Psi}\right)^{O(t)}\ge \Phi\quad\text{ or }\quad t=\Omega\left(\frac{n\log\Psi}{w+\log s}\right).
\]
\end{corollary}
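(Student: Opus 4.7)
The plan is to recognize $(\gamma,\lambda)\text{-}\mathsf{ANN}_{X}^{n}$ as an instance of the conjunction framework from the previous section and then stack the two ingredients that are already in hand: the rectangle-free proposition that immediately precedes the corollary, and the conjunction richness lemma (Lemma~\ref{lemma-richness-conjunction}). No new technical machinery should be needed.

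First I would set $\nu=\mu$ and $\mathcal{Y}=X$, so that $Y=X^n$ and the input distribution is $\mu\times\mu^n$, and identify the point-wise relation $g:X\times X\to\{0,1\}$ with the point-to-point instance $(\gamma,\lambda)\text{-}\mathsf{ANN}_{X}^{1}$, as spelled out right before the proposition. Then the assumed cell-probing scheme solving $(\gamma,\lambda)\text{-}\mathsf{ANN}_{X}^{n}$ at expected cost $t$ is literally a scheme for the conjunction $f(x,y)=\bigwedge_{i=1}^n g(x,y_i)$ under $\mu\times\mu^n$, which is the hypothesis of Lemma~\ref{lemma-richness-conjunction}.

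Next I would verify the two structural hypotheses of Lemma~\ref{lemma-richness-conjunction}. For the density of $0$'s: since $g(x,z)=0$ only when $\mathrm{dist}(x,z)\le\lambda$, and $N_\lambda(x)\subseteq N_{\gamma\lambda}(x)$, we have
\[
\Pr_{x,z\sim\mu}[g(x,z)=0]\le\Pr_{x,z\sim\mu}[z\in N_{\gamma\lambda}(x)]=\mathbb{E}_{x\sim\mu}[\mu(N_{\gamma\lambda}(x))]\le\frac{\beta}{n},
\]
which is the first bullet of the lemma. For the absence of a large $1$-rectangle: the proposition just above the corollary, applied under the $(\Phi,\Psi)$-expansion assumption, rules out any monochromatic $1$-rectangle in $g$ of measure $\ge \frac{1}{\Phi}\times\frac{1.01}{\Psi}$ under $\mu\times\mu$, which is the second bullet with parameters $(\Phi,\Psi/1.01)$.

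Finally I would invoke Lemma~\ref{lemma-richness-conjunction} with these parameters to conclude either $\bigl(sw/(n\log(\Psi/1.01))\bigr)^{O(t)}\ge\Phi$ or $t=\Omega(n\log(\Psi/1.01)/(w+\log s))$, and then absorb the constant $1.01$ into the asymptotic notation using $\log(\Psi/1.01)=\Theta(\log\Psi)$ to obtain the stated bound. There is no real obstacle here: the corollary is essentially a packaging step combining the preceding proposition with Lemma~\ref{lemma-richness-conjunction}. The only thing requiring a moment of care is the constant factor $1.01$ in the rectangle measure, which is harmless because both sides of the conclusion depend on $\Psi$ only through $\log\Psi$.
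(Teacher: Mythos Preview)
Your proposal is correct and follows exactly the route the paper takes: the corollary is presented immediately after the proposition with the remark that it follows from that proposition together with Lemma~\ref{lemma-richness-conjunction}, and you have simply spelled out this combination and absorbed the harmless $1.01$ factor. One small slip to fix in the write-up: the claim ``$g(x,z)=0$ only when $\mathrm{dist}(x,z)\le\lambda$'' is not guaranteed, since $g$ is unspecified on the shell $\lambda<\mathrm{dist}(x,z)\le\gamma\lambda$; what is always true (and what you actually use) is $g(x,z)=0\Rightarrow z\in N_{\gamma\lambda}(x)$, which already gives the density bound $\Pr[g=0]\le\mathbb{E}_{x\sim\mu}[\mu(N_{\gamma\lambda}(x))]\le\beta/n$ without the detour through $N_\lambda$.
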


\begin{remark}
In~\cite{panigrahy2010lower}, a lower bound for $(\gamma,\lambda)\text{-}\mathsf{ANN}^{n}_X$ was proved with the following form:
\[
\left( \frac{swt}{n}\right)^{t}\ge \Phi.
\]
In our Corollary~\ref{corollary-ANN-expansion}, unless the cell-size $w$ is unrealistically large to be comparable to $n$, the corollary always gives the first lower bound 
\[
\left( \frac{sw}{n\log\Psi}\right)^{O(t)}\ge \Phi. 
\]
This strictly improves the lower bound in~\cite{panigrahy2010lower}.
For example, when the metric space is $\left(2^{\Theta(d)},2^{\Theta(d)}\right)$-expanding, this would give us a lower bound $t=\Omega\left(\frac{d}{\log\frac{sw}{nd}}\right)$, which in particular, when the space is linear ($sw=O(nd)$), becomes $t=\Omega(d)$.
\end{remark}

\subsection{Lower bound for ANN in Hamming space}

Let $X=\{0,1\}^d$ be the Hamming space with  Hamming distance $\mathrm{dist}(\cdot,\cdot)$. 
Recall that $N_\lambda(x)$ represents the $\lambda$-neighborhood around $x$, in this case, the Hamming ball of radius $\lambda$ centered at $x$; and for a set $A\subset X$, the $N_\lambda(A)$ is the set of all points within distance $\lambda$ to any point in $A$. 
For any $0\le r\le d$ $B(r)=|N_r(\bar{0})|$ denote the volume of Hamming ball of radius $r$, where $\bar{0}\in\{0,1\}^d$ is the zero vector. Obviously $B(r)=\sum_{k\le r}\binom{d}{k}$.

The following isoperimetric inequality of Harper is well known.
\begin{lemma}[Harper's theorem~\cite{Harper1966385}]
Let $X=\{0,1\}^d$ be the $d$-dimensional Hamming space. For $A\subset X$, let $r$ be such that $|A|\geq B(r)$. Then for every $\lambda>0$, $|N_{\lambda}(A)|\geq B({r+\lambda})$.
\end{lemma}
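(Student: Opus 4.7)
The plan is to prove Harper's inequality by the classical \emph{compression} argument: show that among all subsets of $\{0,1\}^d$ of a given cardinality the Hamming ball minimizes the $\lambda$-neighborhood, and then verify by direct computation that $|N_\lambda(N_r(\bar 0))|=|N_{r+\lambda}(\bar 0)|=B(r+\lambda)$. Because the neighborhood map is monotone in $A$, it suffices to handle the case $|A|=B(r)$ exactly and then apply the result to any $B(r)$-size subset of the given $A$.

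First I would introduce the \emph{simplicial (co-lex) order} $\prec$ on $\{0,1\}^d$: $x\prec y$ iff $|x|_1<|y|_1$, or $|x|_1=|y|_1$ and the largest coordinate on which $x,y$ differ has value $1$ in $y$, where $|x|_1=\sum_k x_k$ denotes Hamming weight. The initial segment of $\prec$ of size $B(r)$ is exactly the Hamming ball $N_r(\bar 0)$, so the theorem reduces to showing that over all sets of a fixed size, the $\prec$-initial segment minimizes $|N_\lambda(\cdot)|$.

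To establish this minimality I would apply the \emph{$ij$-compression} operators $S_{ij}$ (for $i<j$): replace every $x\in A$ having $x_i=0,\ x_j=1$ by the point obtained by swapping coordinates $i$ and $j$, whenever the resulting point is not already in $A$; other points of $A$ remain fixed. Each $S_{ij}$ preserves cardinality, and the key lemma is $|N_\lambda(S_{ij}(A))|\le|N_\lambda(A)|$. I would prove this by partitioning $\{0,1\}^d$ into the four-element cosets obtained by fixing all coordinates except $i,j$, and in each coset carrying out a case analysis on the intersections with $A$, with $S_{ij}(A)$, and with their $\lambda$-neighborhoods; the crucial symmetry is that the Hamming metric is invariant under coordinate permutations, so any neighbor of $A$ reached from a pre-image with $x_i=1,\ x_j=0$ is matched by a neighbor of $S_{ij}(A)$ reached via the swapped pre-image. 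I expect this coset-by-coset bookkeeping to be the \textbf{main obstacle}, since several overlapping cases must be handled simultaneously to rule out any net increase in the neighborhood measure.

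Finally I would iterate: the potential $\Phi(A)=\sum_{x\in A}\sum_{k=1}^d k\,x_k$ strictly decreases whenever some $S_{ij}$ acts nontrivially (a point of potential $j$ in the pair is replaced by one of potential $i<j$), so the process terminates at a left-compressed set $A^*$ with $|A^*|=|A|$ and $|N_\lambda(A^*)|\le|N_\lambda(A)|$. A concluding step — either a second shifting argument within each Hamming-weight layer, or a direct induction on $d$ that splits the cube at the last coordinate and applies the inductive hypothesis to each half — exploits the left-compressed structure of $A^*$ to show $|N_\lambda(A^*)|\ge B(r+\lambda)$ when $|A^*|=B(r)$, and together with $|N_\lambda(A)|\ge|N_\lambda(A^*)|$ this yields the theorem.
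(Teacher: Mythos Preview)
The paper does not supply a proof of this lemma: it is quoted as Harper's classical vertex-isoperimetric inequality and cited to the 1966 source, so there is no in-paper argument to compare yours against.

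On its own merits, your sketch has a real gap. The operators $S_{ij}$ you define swap two coordinates and hence \emph{preserve Hamming weight}; consequently a set stable under every $S_{ij}$ can have an arbitrary weight profile (for instance $\{\bar 0,\bar 1\}\subset\{0,1\}^d$ is fixed by all $S_{ij}$) and need not resemble an initial segment of the simplicial order at all. Thus the entire isoperimetric content of the theorem is pushed into your ``concluding step,'' which you only gesture at. A ``second shifting argument within each Hamming-weight layer'' still cannot move mass between layers and so cannot repair this; and the ``direct induction on $d$ splitting at the last coordinate'' is essentially Harper's original proof in full---if you carry that out, the $S_{ij}$ phase buys you nothing, since left-compressedness does not give the nesting $A^*_1\subseteq A^*_0$ that the induction needs. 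The compression proofs that actually work (e.g.\ Bollob\'as--Leader) use the stronger $UV$-compressions with disjoint $U,V\subset[d]$ and $|U|\le|V|$, which \emph{can} change weight and for which stability genuinely forces an initial segment of the simplicial order; alternatively one proves the case $\lambda=1$ directly by induction on $d$ and then iterates $\lambda$ times.
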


In words, Hamming balls have the worst vertex expansion. 

For $0<r<\frac{d}{2}$, the following upper bound for the volume of Hamming ball is well known:
\[
2^{(1-o(1))d H(r/d)}\le \binom{d}{r}\le B(r)\le 2^{d H(r/d)},
\]
where $H(x)=-x\log_2 x-(1-x)\log_2(1-x)$ is the Boolean entropy function.

Consider the Hamming $(\gamma,\lambda)$-approximate near-neighbor problem $(\gamma,\lambda)\text{-}\ANN_{X}^{n}$. The hard distribution for this problem is just the uniform and independent distribution: For the database $y=(y_1,y_2,\ldots, y_n)\in X^n$, each database point $y_i$ is sampled uniformly and independently from $X=\{0,1\}^n$; and the query point $x$ is sampled uniformly and independently from $X$.

\begin{theorem}\label{ANN-lower-bound}
Let $d\ge 32\log n$. For any $\gamma\ge 1$, there is a $\lambda>0$ such that if $(\gamma,\lambda)\text{-}\ANN_{X}^{n}$  can be solved by a deterministic or Las Vegas randomized cell-probing scheme on a table of $s$ cells, each cell containing $w$ bits, with expected $t$ cell-probes for uniform and independent database and query, then $t=\Omega\left(\frac{d}{\gamma^2\log\frac{sw\gamma^2}{nd}}\right)$ or $t=\Omega\left(\frac{ nd}{\gamma^2(w+\log s)}\right)$.
\end{theorem}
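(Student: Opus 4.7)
The plan is to apply Corollary~\ref{corollary-ANN-expansion} with $\mu$ taken to be the uniform distribution on $X = \{0,1\}^d$. The two hypotheses of the corollary---weak independence at radius $\gamma\lambda$ and $(\Phi,\Psi)$-expansion at radius $\lambda$---can be verified using standard binomial tail estimates together with Harper's theorem. The parameters will be chosen so that $\log\Phi = \Theta(d/\gamma^2)$ and $\log\Psi = \Theta(d/\gamma^2)$, which is exactly what is needed so that the corollary delivers the two claimed bounds.

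First I pick the radius. To make $\mathbb{E}_{x\sim\mu}[\mu(N_{\gamma\lambda}(x))] = B(\gamma\lambda)/2^d$ at most $\beta/n$ for some constant $\beta<1$, I set $\gamma\lambda = \lfloor d/2 - c_0 \sqrt{d\log n}\rfloor$ for a sufficiently large constant $c_0$; the one-sided Chernoff bound $\Pr[\mathrm{Bin}(d,1/2) \le d/2 - s] \le \exp(-2s^2/d)$ immediately gives the desired estimate. The hypothesis $d \ge 32\log n$ ensures that $\gamma\lambda$ can be chosen as a positive integer well below $d/2$, and determines $\lambda$ up to an additive $\Theta(\sqrt{d\log n}/\gamma)$ correction around $d/(2\gamma)$.

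Next I verify $(\Phi,\Psi)$-expansion. Given $A\subseteq X$ with $|A| \ge 2^d/\Phi$, pick the largest $r$ with $B(r) \le |A|$; using the estimate $B(r) \le 2^{dH(r/d)}$ and the Taylor expansion $1 - H(1/2 - \epsilon) = (2/\ln 2)\epsilon^2 + O(\epsilon^4)$, this gives $r \ge d/2 - O(\sqrt{d\log\Phi})$. Harper's theorem then gives $|N_\lambda(A)| \ge B(r+\lambda)$, so by the symmetry $B(m) + B(d - m - 1) = 2^d$ it suffices to show $B(d - r - \lambda - 1) \le 2^d/\Psi$. Writing $(d - r - \lambda)/d = 1/2 - 1/(2\gamma) + O(\sqrt{\log\Phi/d}) + O(\sqrt{\log n}/(\gamma\sqrt{d}))$ and using the same Taylor expansion, the required inequality reduces to
\[
\log\Psi \le \frac{2d}{\ln 2}\left(\frac{1}{2\gamma} - O\!\left(\sqrt{\frac{\log\Phi}{d}}\right) - O\!\left(\frac{\sqrt{\log n}}{\gamma\sqrt{d}}\right)\right)^{2}.
\]
Choosing $\log\Phi = c_1 d/\gamma^2$ with a sufficiently small constant $c_1$, and invoking $d \ge 32\log n$, makes the parenthesised quantity a constant fraction of $1/(2\gamma)$, so that $\log\Psi = c_2 d/\gamma^2$ is achievable for some $c_2 > 0$.

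Finally, plugging $\log\Phi, \log\Psi = \Theta(d/\gamma^2)$ into Corollary~\ref{corollary-ANN-expansion} gives, in the first case, $O(t)\cdot\log(sw/(n\log\Psi)) \ge \log\Phi$, i.e.\ $t = \Omega(d/(\gamma^2\log(sw\gamma^2/(nd))))$, and in the second case $t = \Omega(n\log\Psi/(w + \log s)) = \Omega(nd/(\gamma^2(w + \log s)))$, as claimed. The main obstacle is the third paragraph: the three $\Theta(\sqrt{\cdot})$ corrections coming from the tail bound, from Harper, and from the $1/2$-shift must be simultaneously dominated by the leading $1/(2\gamma)$ gap so that the squared expression stays $\Omega(1/\gamma^2)$; this is the point at which the slack $d \ge 32\log n$ and the approximation ratio $\gamma$ both get paid in.
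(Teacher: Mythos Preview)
Your proposal is correct and follows the same route as the paper: apply Corollary~\ref{corollary-ANN-expansion} with the uniform $\mu$, verify weak independence at radius $\gamma\lambda=\tfrac{d}{2}-\Theta(\sqrt{d\log n})$ via Chernoff, and verify $(\Phi,\Psi)$-expansion via Harper's theorem plus the Taylor estimate $1-H(\tfrac12-\epsilon)=\Theta(\epsilon^2)$. The paper's execution is a bit cleaner than your third paragraph: instead of taking the largest $r$ with $B(r)\le|A|$ and tracking three separate $O(\sqrt{\cdot})$ corrections, it first uses $d\ge 32\log n$ to get $\lambda\ge d/(4\gamma)$, then simply \emph{fixes} $r=\tfrac{d}{2}-\tfrac{d}{8\gamma}$ and sets $\Phi=\Psi=2^{(1-H(r/d))d}$; the symmetry $B(r+\lambda)\ge B(\tfrac d2+\tfrac{d}{8\gamma})\ge 2^d-B(r)$ makes the expansion estimate immediate without any error-term balancing.
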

\begin{proof}
Choose $\lambda$ to satisfy $\gamma\lambda=\frac{d}{2}-\sqrt{2d\ln(2n)}$. 
Let $\mu$ be uniform distribution over $X$. We are going to show:
\begin{itemize}
\item $\mathbb{E}_{x\sim\mu}[\mu(N_{\gamma\lambda}(x))]\le\frac{1}{2n}$;
\item the $\lambda$-neighborhoods in $X$ are $(\Phi,\Psi)$-expanding under distribution $\mu$ for some $\Phi=2^{\Omega(d/\gamma^2)}$ and $\Psi=2^{\Omega(d/\gamma^2)}$.
\end{itemize}
Then the cell-probe lower bounds follows directly from Corollary~\ref{corollary-ANN-expansion}.

First, by the Chernoff bound, $\mu(N_{\gamma\lambda}(x))\le \frac{1}{2n}$ for any point $x\in X$. Thus trivially $\mathbb{E}_{x\sim\mu}[\mu(N_{\gamma\lambda}(x))]\le\frac{1}{2n}$.

On the other hand, for $d\ge 32\log n$ and $n$ being sufficiently large, it holds that $\lambda\ge\frac{d}{4\gamma}$.
Let $r=\frac{d}{2}-\frac{d}{8\gamma}$. And consider any $A\subseteq X$ with $\mu(A)\ge 2^{-(1-H(r/d))d}$. We have $|A|\ge  2^{d H(r/d)}\ge B(r)$. Then by Harper's theorem, 
\[
\mbox{$|N_\lambda(A)|\ge B\left(r+\lambda\right)\ge B\left(\frac{d}{2}+\frac{d}{8\gamma}\right)
\ge 2^d-B\left(\frac{d}{2}-\frac{d}{8\gamma}\right)$} =2^d-B(r)\ge 2^d-2^{d H(r/d)},
\] 
which means $\mu(N_\lambda(A))\ge 1-2^{-(1-H(r/d))d}$. In other words, the $\lambda$-neighborhoods in $X$ are $(\Phi,\Psi)$-expanding under distribution $\mu$ for $\Phi=\Psi=2^{(1-H(r/d))d}$, where $r/d=\frac{1}{2}-\frac{1}{8\gamma}$. Apparently $1-H(\frac{1}{2}-x)=\Theta(x^2)$ for small enough $x>0$. Hence,  $\Phi=\Psi=2^{\Theta(d/\gamma^2)}$.
\end{proof}

\subsection{Lower bound for ANN under L-infinity norm}
Let $\Sigma=\{0,1,\dots,m\}$ and the metric space is $X=\Sigma^d$ with $\ell_\infty$ distance $\mathrm{dist}(x,y)=\left\|x-y\right\|_{\infty}$ for any $x,y\in X$.

Let $\mu$ be the distribution over $X$ as defined in~\cite{patrascu08Linfty}: First define a distribution $\pi$ over $\Sigma$ as $p(i)=2^{-(2\rho)^{i}}$ for all $i>0$ and $\pi(0)=1-\sum_{i>0}\pi(i)$; and then $\mu$ is defined as $\mu(x_1,x_2,\ldots,x_d)=\pi(x_1)\pi(x_2)\ldots\pi(x_d)$.

The following isoperimetric inequality is proved in~\cite{patrascu08Linfty}.
\begin{lemma}[Lemma~9 of \cite{patrascu08Linfty}]\label{isopetrimetric-l-infty}
For any $A\subseteq X$, it holds that $\mu(N_1(A))\geq (\mu(A))^{1/\rho}$.
\end{lemma}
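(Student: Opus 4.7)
The plan is to prove $\mu(N_1(A)) \ge \mu(A)^{1/\rho}$ by induction on the dimension $d$. The engine is the one-step estimate $\pi(i-1) = \pi(i)^{1/(2\rho)} \ge \pi(i)^{1/\rho}$ (using $\rho\ge 1$), which is immediate from $\pi(i)=2^{-(2\rho)^{i}}$: stepping one coordinate unit toward $0$ already raises the mass past the required power, with room to spare.

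For the base case $d=1$ I would split on whether $0\in A$. If $a^{*}:=\min A \ge 1$, then $a^{*}-1\in N_1(A)$, and the super-exponential decay of $\pi$ gives $\pi(A)\le (1+o(1))\pi(a^{*})$, so $\pi(N_1(A))\ge \pi(a^{*}-1) = \pi(a^{*})^{1/(2\rho)} \ge \pi(A)^{1/\rho}$ with slack. If $0\in A$ and $A\ne \Sigma$, let $a^{**}:=\min(\Sigma\setminus A)$; then $a^{**}\in N_1(A)\setminus A$ gives $\pi(N_1(A)) \ge \pi(A)+\pi(a^{**})$. Combining the sharp estimate $1-\pi(A) \le (1+o(1))\pi(a^{**})$ with the Bernoulli bound $(1-x)^{1/\rho} \le 1 - x/\rho$ closes the one-dimensional argument.

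For the inductive step, slice $A \subseteq \Sigma^d$ along the first coordinate: $A_i := \{y \in \Sigma^{d-1}: (i,y)\in A\}$, so that $\mu(A) = \sum_i \pi(i)a_i$ with $a_i := \mu_{d-1}(A_i)$, and $(N_1(A))_j = N_1^{(d-1)}\bigl(\bigcup_{|i-j|\le 1}A_i\bigr)$. The induction hypothesis gives $\mu_{d-1}((N_1(A))_j) \ge (\max_{|i-j|\le 1}a_i)^{1/\rho}$, reducing the task to the functional one-dimensional inequality
\[
\sum_j \pi(j) \max_{|i-j|\le 1} a_i^{1/\rho} \;\ge\; \Bigl(\sum_i \pi(i)a_i\Bigr)^{1/\rho}\qquad \text{for } a:\Sigma\to[0,1].
\]
I would handle this by layer cake: with $B_t := \{i : a_i \ge t\}$, the left side equals $\int_0^1 \pi(N_1(B_{u^\rho}))\, du$ and is therefore $\ge \int_0^1 \pi(B_{u^\rho})^{1/\rho}\, du$ by the already-proved set version, while the right side equals $\bigl(\rho \int_0^1 \pi(B_{u^\rho})u^{\rho-1}\,du\bigr)^{1/\rho}$. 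Writing $h(u):= \pi(B_{u^\rho})$ (non-increasing, $[0,1]$-valued), everything boils down to the scalar estimate $\bigl(\int_0^1 h(u)^{1/\rho}\,du\bigr)^{\rho} \ge \rho\int_0^1 h(u)\, u^{\rho-1}\,du$.

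The main obstacle is this scalar integral inequality. It is tight on both extremal non-increasing profiles $h = c\cdot 1_{[0,a]}$ (each side equal to $ca^\rho$) and on constants $h\equiv c$ (each side equal to $c$), which is encouraging, but naive Jensen on the concave map $x\mapsto x^{1/\rho}$ points in the wrong direction, so a more delicate argument is required. My plan is to verify it by parametrizing a general non-increasing profile along the one-parameter family $h_\lambda := \lambda\cdot 1_{[0,c]} + (1-\lambda)c$ of fixed mass $c$, checking that $\mathrm{LHS}-\mathrm{RHS}\ge 0$ throughout this family (with equality at both endpoints $\lambda\in\{0,1\}$), and then reducing general non-increasing $h$ to this family via an outer layer-cake decomposition into level sets.
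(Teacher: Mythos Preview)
The paper does not prove this lemma at all; it is quoted as Lemma~9 of Andoni--Croitoru--P\v{a}tra\c{s}cu and used as a black box. So there is no in-paper argument to compare against, and what follows is an evaluation of your attempt on its own.

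Your overall plan is sound: induct on $d$, slice along the first coordinate, apply the hypothesis slice-wise to get $\mu_{d-1}((N_1(A))_j)\ge(\max_{|i-j|\le 1}a_i)^{1/\rho}$, and reduce via layer cake to a one-variable statement. The layer-cake computation is correct, and the resulting scalar inequality is exactly the crux. Your base-case sketches are morally right but the ``$(1+o(1))$'' must be made quantitative: in Case~2, writing $q=1-\pi(A)$ and using $(1-q)^{1/\rho}\le 1-q/\rho$, what is actually needed is $\pi(a^{**})\ge q(1-1/\rho)$; since $q\le\sum_{i\ge a^{**}}\pi(i)\le\pi(a^{**})\bigl(1+\pi(a^{**})^{2\rho-1}+\cdots\bigr)$, this boils down to the tail factor being at most $\rho/(\rho-1)$, which holds because $\pi(a^{**})\le 2^{-2\rho}$ and $\rho>10$. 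Say this explicitly rather than hiding it in $o(1)$.

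The genuine gap is your plan for the scalar inequality
\[
\Bigl(\int_0^1 h(u)^{1/\rho}\,du\Bigr)^{\rho}\;\ge\;\rho\int_0^1 h(u)\,u^{\rho-1}\,du,\qquad h:[0,1]\to[0,1]\text{ non-increasing}.
\]
Checking it on the family $h_\lambda=\lambda\mathbf{1}_{[0,c]}+(1-\lambda)c$ and then ``reducing general $h$ via an outer layer cake'' does not work: a layer-cake decomposition expresses $h$ as a superposition of indicators, but the left-hand side is not linear (or even concave) in $h$, so the inequality does not pass to superpositions. You need a direct argument. Here is one: set $g=h^{1/\rho}$ (still non-increasing) and $G(u)=\int_0^u g$. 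Monotonicity gives $u\,g(u)\le G(u)$, hence
\[
g(u)^{\rho}u^{\rho-1}=g(u)\bigl(u\,g(u)\bigr)^{\rho-1}\le g(u)\,G(u)^{\rho-1}=\tfrac{1}{\rho}\,\frac{d}{du}\,G(u)^{\rho},
\]
and integrating over $[0,1]$ yields $\rho\int_0^1 h\,u^{\rho-1}\,du\le G(1)^{\rho}=\bigl(\int_0^1 h^{1/\rho}\bigr)^{\rho}$. This replaces your interpolation plan and closes the induction.
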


Consider the $(\gamma,\lambda)$-approximate near-neighbor problem $(\gamma,\lambda)\text{-}\ANN_{\ell_\infty}^{n}$ defined in the metric space $X$ under $\ell_\infty$ distance. The hard distribution for this problem is $\mu\times\mu^n$: For the database $y=(y_1,y_2,\ldots, y_n)\in X^n$, each database point $y_i$ is sampled independently according to $\mu$; and the query point $x$ is sampled independently from $X$ according to $\mu$. 
The following lower bound has been proved in~\cite{patrascu08Linfty} and~\cite{kapralov2012nns}.

Fix any $\epsilon>0$ and $0<\delta<\frac{1}{2}$. Assume $\Omega\left(\log^{1+\epsilon}{n}\right)\leq d\leq o(n)$. For $3<c\leq O(\log\log d)$, define $\rho=\frac{1}{2}(\frac{\epsilon}{4}\log d)^{1/c}>10$. 
Now we choose $\gamma=\log_{\rho}\log d$ and $\lambda=1$.

\begin{theorem}\label{ANN-l-infty}
With $d,\gamma,\lambda,\rho$ and the metric space $X$ defined as above, 
if $(\gamma,\lambda)\text{-}\ANN_{\ell_\infty}^{n}$  can be solved by a deterministic or Las Vegas randomized cell-probing scheme on a table of $s$ cells, each cell containing $w\le n^{1-2\delta}$ bits, with expected $t\le \rho$ cell-probes under input distribution $\mu\times\mu^n$, then $sw= n^{\Omega(\rho/t)}$.
\end{theorem}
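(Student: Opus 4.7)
The plan is to invoke Lemma~\ref{lemma-richness-conjunction} directly with the point-wise $\ANN$ function $g:X\times X\to\{0,1\}$ (so $g(x,z)=0$ iff $\ell_\infty(x,z)\le 1$ and $g(x,z)=1$ iff $\ell_\infty(x,z)>\gamma$), under $\mu\times\mu^n$, with parameters $\Phi,\Psi$ obtained by \emph{iterating} the isoperimetric inequality of Lemma~\ref{isopetrimetric-l-infty} a total of $\gamma$ times. I apply Lemma~\ref{lemma-richness-conjunction} rather than Corollary~\ref{corollary-ANN-expansion}, since the Corollary's weak-independence hypothesis $\mathbb{E}_x[\mu(N_{\gamma\lambda}(x))]\le\beta/n$ fails for this $\mu$ (the $\gamma$-neighborhood of a typical point has mass close to $1$), while the Lemma only needs the density of $g$-zeros, which corresponds to $\ell_\infty$-distance $\le 1$.

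The first step is verifying the density-of-zeros condition. Since $\mu=\pi^{\otimes d}$,
\[
\Pr_{x,z\sim\mu}[g(x,z)=0]=\bigl(\Pr_{a,b\sim\pi}[|a-b|\le 1]\bigr)^d.
\]
The dominant contribution to $\Pr[|a-b|\ge 2]$ comes from $\{a=0,b\ge 2\}$ and the symmetric configuration, giving $\Pr[|a-b|\ge 2]\asymp\pi(2)=2^{-4\rho^2}$, so the density of zeros is at most $\exp(-\Theta(d\cdot 2^{-4\rho^2}))$. The choice $\rho=\tfrac12(\tfrac{\epsilon}{4}\log d)^{1/c}$ gives $4\rho^2=((\epsilon/4)\log d)^{2/c}=o(\log d)$ for $c\ge 3$, and combined with $d\ge\Omega(\log^{1+\epsilon}n)$ this yields $d\cdot 2^{-4\rho^2}\ge\ln(n/\beta)$ for some $\beta<1$.

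The second step iterates Lemma~\ref{isopetrimetric-l-infty}. In the $\ell_\infty$ metric a point at distance $\le k$ can always be reached in $k$ unit hops, so $N_k(A)\supseteq N_1(N_1(\cdots N_1(A)\cdots))$ ($k$ compositions), and repeated application of $\mu(N_1(B))\ge\mu(B)^{1/\rho}$ yields $\mu(N_k(A))\ge\mu(A)^{1/\rho^k}$ for every integer $k\ge 1$. Setting $k=\gamma=\log_\rho\log d$, so that $\rho^\gamma=\log d$, gives $\mu(N_\gamma(A))\ge\mu(A)^{1/\log d}$. Every monochromatic $1$-rectangle $A\times B$ in $g$ has $\ell_\infty(x,z)>\gamma$ throughout, hence $B\cap N_\gamma(A)=\emptyset$ and so $\mu(B)\le 1-\mu(A)^{1/\log d}$.

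Finally, I set $\Phi=n^{c\rho}$ for a small constant $c>0$ and $\Psi=1/\bigl(1-\Phi^{-1/\log d}\bigr)$, making $g$ free of $1$-rectangles with $\mu(A)\ge 1/\Phi$ and $\mu(B)\ge 1/\Psi$ simultaneously. Plugging into Lemma~\ref{lemma-richness-conjunction} and using $\log\Psi\asymp\Phi^{-1/\log d}=n^{-c\rho/\log d}$, the first alternative becomes
\[
sw\;\ge\;n\log\Psi\cdot\Phi^{\Omega(1/t)}\;=\;n^{1-c\rho/\log d+\Omega(c\rho/t)}\;=\;n^{\Omega(\rho/t)},
\]
using $t\le\rho\ll\log d$. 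The second alternative $t=\Omega(n\log\Psi/(w+\log s))=\Omega(n^{1-o(1)}/(w+\log s))$ is incompatible with $t\le\rho=O((\log\log n)^{1/c})$ under $w\le n^{1-2\delta}$ and $\log s\le n^{1-\delta}$, while in the degenerate range $\log s>n^{1-\delta}$ the bound $sw\ge 2^{n^{1-\delta}}$ overshoots the target trivially. The main obstacle I expect is the density-of-zeros calibration: reconciling the specific $\rho=\tfrac12(\tfrac\epsilon4\log d)^{1/c}$ with $d\cdot 2^{-4\rho^2}\gtrsim\ln n$ is a tight parameter balance underlying the $\ell_\infty$ construction of~\cite{patrascu08Linfty}; once that and the iterated isoperimetric bound are in hand, the rest is routine.
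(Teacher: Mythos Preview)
Your proposal has a genuine gap stemming from a confusion about which radius governs the density condition and which governs the expansion condition.

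In Lemma~\ref{lemma-richness-conjunction} (and hence Corollary~\ref{corollary-ANN-expansion}), the ``density of $0$'s in $g$'' hypothesis is what makes $f=\bigwedge_i g(x,y_i)$ $\alpha$-dense in $1$'s so that the richness lemma (Lemma~\ref{lemma-richness-average}) applies. But an ANN algorithm is only \emph{guaranteed} to output $1$ when \emph{all} distances exceed $\gamma\lambda$; on inputs with some distance in $(\lambda,\gamma\lambda]$ it may legitimately output $0$. Hence the function actually computed by the algorithm satisfies $\Pr[f^{*}=1]\ge\Pr[\text{all }\mathrm{dist}>\gamma\lambda]$, and the density condition you must verify is $\Pr_{x,z\sim\mu}[\mathrm{dist}(x,z)\le\gamma\lambda]\le\beta/n$, i.e.\ weak independence of the $\gamma\lambda$-neighborhoods. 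This is exactly the first hypothesis of Corollary~\ref{corollary-ANN-expansion}; it cannot be replaced by the weaker bound $\Pr[\mathrm{dist}\le\lambda]\le\beta/n$ that you compute. Put differently, you are implicitly using two incompatible completions of the partial function $g$: setting don't-cares to $1$ for the density hypothesis and to $0$ for the rectangle-refutation hypothesis. Lemma~\ref{lemma-richness-conjunction} requires a single total $g$.

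Dually, once richness produces a $1$-rectangle $A\times B$ in $f^{*}$, the only conclusion ANN correctness yields is that no $y_i$ is $\lambda$-close to any $x\in A$ (else the algorithm would be forced to output $0$). So $B\subseteq\bigl(X\setminus N_{\lambda}(A)\bigr)^{n}$, and the expansion that matters is of the $\lambda$-neighborhoods, not the $\gamma\lambda$-neighborhoods. Your iterated isoperimetric bound $\mu(N_{\gamma}(A))\ge\mu(A)^{1/\rho^{\gamma}}$ is therefore aimed at the wrong radius.

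The paper's proof simply applies Corollary~\ref{corollary-ANN-expansion} as stated. The weak-independence condition $\mu(N_{\gamma\lambda}(x))\le\frac{1}{2n}$ is taken from Claim~6 of~\cite{patrascu08Linfty} (so your assertion that it fails contradicts the cited claim), and the $(\Phi,\Psi)$-expansion of the $\lambda=1$ neighborhoods comes from a \emph{single} application of Lemma~\ref{isopetrimetric-l-infty}: for $\mu(A)\ge n^{-\delta\rho}$ one gets $\mu(N_{1}(A))\ge n^{-\delta}$, i.e.\ $(\Phi,\Psi)=\bigl(n^{\delta\rho},\,n^{\delta}/(n^{\delta}-1)\bigr)$. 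Plugging $\log\Psi\asymp n^{-\delta}$ into Corollary~\ref{corollary-ANN-expansion} then yields $\bigl(sw/n^{1-\delta}\bigr)^{O(t)}\ge n^{\delta\rho}$ or $t=\Omega\bigl(n^{1-\delta}/(w+\log s)\bigr)$; the second alternative is incompatible with $t\le\rho$ and $w\le n^{1-2\delta}$, and the first gives $sw=n^{\Omega(\rho/t)}$. No iteration of the isoperimetric inequality is needed.
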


\begin{proof}
The followings are true 
\begin{itemize}
\item $\mu(N_{\gamma\lambda}(x))=\frac{e^{-\log^{1+\epsilon/3}{n}}}{n}\le\frac{1}{2n}$ for any $x\in X$ (Claim~6 in~\cite{patrascu08Linfty});
\item the $\lambda$-neighborhoods in $X$ are $(n^{\delta\rho},\frac{n^{\delta}}{n^{\delta}-1})$-expanding under distribution $\mu$ for $\Phi=n^{\delta\rho}$ and $\Psi=2^{\Omega(d/\gamma^2)}$.
\end{itemize}
To see the expansion is true, let $\Phi=n^{\delta\rho}$ and $\Psi=\frac{n^{\delta}}{n^{\delta}-1}$. By Lemma~\ref{isopetrimetric-l-infty}, for any set $A\subset X$ with $\mu(A)\geq \Phi$, we have $\mu(N_{\lambda}(A))\geq n^{-\delta}\geq 1-\frac{1}{\Psi}$. This means  $\lambda$-neighborhoods of $\mathcal{M}$ are $(n^{\delta\rho},\frac{n^{\delta}}{n^{\delta}-1} )$-expanding.

Due to Corollary~\ref{corollary-ANN-expansion}, 
either 
$\left(\frac{sw}{n^{1-\delta}}\right)^{O(t)}
\ge n^{\delta\rho}$
or
$t
=\Omega\left(\frac{n^{1-\delta}}{w+\log s}\right)$.
The second bound is always higher with our ranges for $w$ and $t$. The first bound gives $sw=n^{\Omega(\rho/t)}$.
\end{proof}

\end{document}